\documentclass[11pt,a4paper,oneside]{article}
\pdfoutput=1
\usepackage[ascii]{inputenc}

\usepackage[margin=2.5cm]{geometry} %, includefoot, footskip=30pt]{geometry}
\usepackage[dvipsnames]{xcolor}
\usepackage[bookmarksnumbered,linktocpage,hypertexnames=false,colorlinks=true,linkcolor=NavyBlue,urlcolor=NavyBlue,citecolor=ForestGreen,anchorcolor=green,breaklinks=true,pagebackref=true,pdfusetitle]{hyperref}

\usepackage{microtype}
\usepackage[T1]{fontenc}
\usepackage{textcomp}
\usepackage[english]{babel}
\usepackage{textcomp}

\usepackage{amsmath}
\usepackage{amssymb}
\usepackage{amsfonts}
\usepackage{amsthm}
\usepackage{mathtools}
\usepackage{bbm}
\usepackage{bm}
\usepackage{braket}
\usepackage{mathrsfs}
\usepackage{latexsym}
\usepackage{mathdots}
\usepackage{mleftright}

\usepackage{graphicx}
\usepackage{ytableau}
\ytableausetup{boxsize=.8em}
\usepackage{booktabs}
\usepackage{caption}
\usepackage{subcaption}
\usepackage{float}

\usepackage[capitalize, nameinlink]{cleveref}

\usepackage{enumitem}
\usepackage{xspace}
\usepackage{tikz}
\usepackage{quantikz}
\usepackage{authblk}

\theoremstyle{plain}
\newtheorem{thm}{Theorem}[section]\crefname{thm}{Theorem}{Theorems}
\newtheorem{lem}[thm]{Lemma}\AddToHook{env/lem/begin}{\crefalias{thm}{lem}}\crefname{lem}{Lemma}{Lemmas}
\newtheorem{prp}[thm]{Proposition}\AddToHook{env/prp/begin}{\crefalias{thm}{prp}}\crefname{prp}{Proposition}{Propositions}
\newtheorem{cor}[thm]{Corollary}\AddToHook{env/cor/begin}{\crefalias{thm}{cor}}\crefname{cor}{Corollary}{Corollaries}
\AddToHook{env/cnj/begin}{\crefalias{thm}{cnj}}\crefname{cnj}{Conjecture}{Conjectures}
\theoremstyle{definition}
\newtheorem{dfn}[thm]{Definition}\AddToHook{env/dfn/begin}{\crefalias{thm}{dfn}}\crefname{dfn}{Definition}{Definitions}
\newtheorem{prob}[thm]{Problem}\AddToHook{env/prob/begin}{\crefalias{thm}{prob}}\crefname{prob}{Problem}{Problems}
\newtheorem{rem}[thm]{Remark}\AddToHook{env/rem/begin}{\crefalias{thm}{rem}}\crefname{rem}{Remark}{Remarks}
\newtheorem{exa}[thm]{Example}\AddToHook{env/exa/begin}{\crefalias{thm}{exa}}\crefname{exa}{Example}{Examples}

\numberwithin{equation}{section}
\DeclareMathOperator{\QFT}{QFT}

\DeclareMathOperator{\poly}{poly}
\DeclareMathOperator{\polylog}{polylog}

\DeclareMathOperator{\GL}{GL}

\DeclareMathOperator{\U}{U}

\DeclareMathOperator{\Alt}{Alt}
\DeclareMathOperator{\Sym}{Sym}

\DeclarePairedDelimiter\abs{\lvert}{\rvert}

\DeclarePairedDelimiter\parens{\lparen}{\rparen}

\newcommand{\N}{\mathbbm{N}}

\newcommand{\C}{\mathbbm{C}}

\newcommand{\NP}{\mathsf{NP}}
\newcommand{\coNP}{\mathsf{coNP}}
\renewcommand{\P}{\mathsf{P}}
\newcommand{\FP}{\mathsf{FP}}
\newcommand{\GapP}{\mathsf{GapP}}
\newcommand{\BQP}{\mathsf{BQP}}
\newcommand{\QMA}{\mathsf{QMA}}
\newcommand{\VP}{\mathsf{VP}}
\newcommand{\VNP}{\mathsf{VNP}}

\newcommand{\Sch}{\mathcal{S}}

\newcommand{\ot}{\otimes}
\newcommand{\op}{\oplus}

\newcommand{\id}{\mathbbm{1}}

\renewcommand{\phi}{\varphi}
\newcommand{\g}{{r_G}}
\newcommand{\h}{{r_H}}

\def\la{\lambda}
\def\al{\alpha}

\def\ga{\gamma}
\def\vla{\vec{\la}}
\def\vmu{\vec{\mu}}
\def\vnu{\vec{\nu}}

\renewcommand{\Pr}{\mathop{\bf Pr\/}}

\renewcommand{\vec}[1]{\boldsymbol{\mathbf{#1}}}

\newcommand{\irr}[2]{\{#1\}_{#2}}
\newcommand{\restr}{\!\!\downarrow^G_H}

\newcommand\frestr[2]{{\left.\kern-\nulldelimiterspace #1 \vphantom{\big|} \right|_{#2}}}
\allowdisplaybreaks[4]
\begin{document}

\author[1]{Matthias~Christandl}
\author[2]{Aram W.~Harrow}
\author[3]{Greta~Panova}
\author[1]{Pietro~M.~Posta}
\author[4,5,6]{Michael~Walter}
\affil[1]{Department of Mathematical Sciences, University of Copenhagen, Universitetsparken 5, 2100 Copenhagen, Denmark}
\affil[2]{Center for Theoretical Physics -- a Leinweber Institute, MIT, Cambridge, MA, 02139, USA}
\affil[3]{Department of Mathematics, University of Southern California, Los Angeles, CA, 90089, USA}
\affil[4]{Ludwig-Maximilians-Universit\"at M\"unchen, Theresienstr.~37, 80333 M\"unchen, Germany}
\affil[5]{Munich Center for Quantum Science and Technology (MCQST), Schellingstr.~4, 80799~M\"unchen, Germany}\title{Plethysm is in \texorpdfstring{$\#\mathsf{BQP}$}{\#BQP}}
\affil[6]{Korteweg-de Vries Institute for Mathematics \& QuSoft, University of Amsterdam, The Netherlands}
\date{}
\maketitle
\begin{abstract}
Some representation-theoretic multiplicities, such as the Kostka and the Littlewood-Richardson coefficients, admit a combinatorial interpretation that places their computation in the complexity class $\#\mathsf{P}$.
Whether this holds more generally is considered an important open problem in mathematics and computer science, with relevance for geometric complexity theory and quantum information.
Recent work has investigated the quantum complexity of particular multiplicities, such as the Kronecker coefficients and certain special cases of the plethysm coefficients.

Here, we show that a broad class of representation-theoretic multiplicities is in $\#\mathsf{BQP}$.
In particular, our result implies that the plethysm coefficients are in~$\#\mathsf{BQP}$, which was only known in special cases.
It also implies all known results on the quantum complexity of previously studied coefficients as special cases, unifying, simplifying, and extending prior work.
We obtain our result by multiple applications of the Schur transform.
Recent work has improved its dependence on the local dimension, which is crucial for our work.
We further describe a general approach for showing that representation-theoretic multiplicities are in~$\#\mathsf{BQP}$ that captures our approach as well as the approaches of prior work.
We complement the above by showing that the same multiplicities are also naturally in $\mathsf{GapP}$ and obtain polynomial-time classical algorithms when certain parameters are fixed.
\end{abstract}
%\tableofcontents

%=============================================================================
\section{Introduction}\label{sec: intro}
%=============================================================================
Computing multiplicities is a central problem in representation theory and algebraic combinatorics, and is directly relevant to quantum information theory and geometric complexity theory.
In most cases, no closed-form formulas are known.
A natural and fruitful question is to ask whether these multiplicities count combinatorial objects.
If so, then one obtains \emph{positive combinatorial interpretation}.
Stanley stated the problem of finding such interpretations for important representation-theoretic multiplicities in his influential paper~\cite{stanley2000positivity}.
Specific multiplicities of interest include:
the \emph{Littlewood-Richardson (LR) coefficients}, discussed in 1934~\cite{littlewood1934group}, giving the multiplicity of an irreducible representation of the general linear group~$\GL(n)$ in the tensor product of two others;
the \emph{Kronecker coefficients} of the symmetric group, defined in 1938~\cite{Mur38}, giving the multiplicity of an irreducible representation of the symmetric group~$S_n$ in the tensor product of two others;
and the \emph{plethysm coefficients}, first mentioned by Littlewood in 1936~\cite{Littlewood_36} and defined formally in Ref.~\cite{Lit58}, which gives the multiplicity of an irreducible $\GL(n)$-representation in the composition of two irreducible $\GL$-representations (see \cref{sub:our_contributions}).
By Schur-Weyl duality, the Kronecker coefficients can also be defined in terms of the general linear group, and the LR and plethysm coefficients can also be defined in terms of the symmetric group.

Multiplicities arise naturally across a broad range of areas and applications in mathematics, physics, and computer science, and accordingly they have a long history of study using different methods, from representation theory and invariant theory, to geometry and algebraic combinatorics.
For the plethsym coefficients in particular, initial properties were understood via geometric means, notably their stability~\cite{brion1993stable}. The combinatorial investigations started with the unfulfilled hope of finding an interpretation as counting integer points in polytopes (as is known for the LR coefficients, see below), see e.g.\ the works of Kahle and Michalek~\cite{kahle2016plethysm}.
Approaches via tableaux combinatorics were developed by many groups, see e.g.\ Refs.~\cite{de2021plethysms,colmenarejo2024mystery,orellana2022plethysm}.

The general goal is to give an interpretation as counts of some simple-to-describe but exponentially large set.
Ultimately, however, combinatorial formulas are only known in very special situations, see Refs.~\cite{DIP,orellana2024quasi,pps25} for some of the known cases.

%-----------------------------------------------------------------------------
\subsection{The complexity of computing multiplicities}
%-----------------------------------------------------------------------------
While the notion of ``positive combinatorial interpretation'' is not a priori formally defined, the modern interpretation formalizes it as ``proving the computational problem is in~$\#\P$, and deciding positivity is in~$\NP$'', see Refs.~\cite{Pak22,panova2023computational}.
Indeed, the complexity class $\#\P$ captures the problem of counting gadgets (witnesses) of which there might be many but which can be recognized efficiently (in polynomial time).
This perspective uncovered surprising connections between mathematics and computer science and led to new insights in both fields.
The LR coefficients turn out to be in~$\#\P$, as they admit an interpretation as certain tableaux and integer points in polytopes, whereas their positivity is even decidable in~$\P$ \cite{MNS,BurgisserIkenmeyer2013} as a consequence of the Knutson-Tao proof of the saturation conjecture~\cite{KT}.
In contrast, the problems for Kronecker and plethysm coefficients are still wide open~\cite{Pan23}.
We know that both coefficients are $\#\P$-hard to compute and $\NP$-hard to decide positivity of, see Refs.~\cite{IMW17,FI20}.
However, it is not known whether they are in~$\#\P$, that is, whether they admit a positive combinatorial interpretation, or whether their positivity is in~$\NP$.
\footnote{However, it is known that, for a broad class multiplicities, a natural \emph{asymptotic} version of the positivity problem is in $\NP \cap \coNP$~\cite{BCMW17}.}

As representation-theoretic multiplicities are linear-algebraic quantities defined in high dimensions, it is natural to study their complexity from the perspective of quantum computation.
In particular, their very definition as the dimension of the space of multiplicities of some irreducible representation, gives the hope that one could place them in the quantum complexity class~$\#\BQP$, the quantum analogue of~$\#\P$, by the intuition that these multiplicities might count the dimension of easy-to-recognize subspaces in some bigger vector space.
Indeed, in Ref.~\cite{bravyi2024quantum} it was shown that deciding positivity of Kronecker coefficients is in~$\QMA$, the quantum analogue of~$\NP$, and computing a certain multiple of these coefficients is in~$\#\BQP$.
Shortly after, it was shown in Ref.~\cite{ikenmeyer2023remark} that computing the Kronecker coefficients, as well as certain \emph{special} plethysm coefficients (see below), is in $\#\BQP$.
See also Ref.~\cite{christandl2015algcomp}.
In certain parameter regimes, efficient quantum algorithms for computing these multiplicities were proposed in Ref.~\cite{LH24}, however many of those cases were later shown to be polynomial-time computable by classical algorithms as well~\cite{panova2025polynomial}, see also Ref.~\cite{CDW}.
Thus the quantum and classical complexity of representation-theoretic multiplicities is subtle even in special cases.
Here we consider this problem from a general perspective, motivated by the outstanding problem of characterizing the complexity of the plethysm coefficients.

%-----------------------------------------------------------------------------
\subsection{Relevance in algebraic complexity theory and quantum information theory}
%-----------------------------------------------------------------------------
In algebraic complexity theory, geometric complexity theory (GCT) is a program that seeks to establish lower bounds, such as separating~$\VNP$ from~$\VP$, by exploiting the symmetries of the universal polynomials, for example the permanent and determinant of a symbolic matrix.
One approach via algebraic geometry and representation theory goes via so-called \emph{multiplicity obstructions}.
Namely, decompose the coordinate rings of the $\GL$-orbit closures of a $\VNP$-complete polynomial (usually the permanent)  into irreducible components, and compare the multiplicities of these components to the ones corresponding to a universal polynomial for $\VP$ (usually the determinant).
The plethysm coefficients (as well as the Kronecker coefficients) play a natural and significant role in such computations, for example as upper bounds or even as part of explicit formulas for such multiplicities, see Refs.~\cite{BurgisserChristandlIkenmeyer2011, BurgisserIkenmeyer2011,burgisser2017permanent,burgisser2019no,DIP,IP17}.
The GCT program can also be adapted to other situations, such as finding lower bounds for the complexity of the matrix multiplication problem~\cite{BurgisserIkenmeyer2011}, and representation-theoretic multiplicities again turn out to be crucial.

In quantum physics, representation-theoretic multiplicities are ubiquitous since the state of quantum systems are described by (unit) vectors in high-dimensional complex vector spaces, which are acted upon (unitarily) by symmetry groups.
In some situations (e.g., high energy physics) the groups and representations of interest are fixed, in quantum information and computation one naturally considers sequences of groups and representations.
One important setting is the one-body quantum marginal problem.
Indeed, the compatibility of the one-body marginals with an overall pure state is guided by the Kronecker coefficients~\cite{ChristandlHarrowMitchison2007,ChristandlMitchison2006, Klyachko2004QuantumMarginal}, while the plethysm coefficients appear in the fermionic version of the problem~\cite{AltunbulakKlyachko2008}, known as the one-body $N$-representability problem in quantum chemistry~\cite{ColemanYukalov2000}, which has been studied at least since the early 1970s~\cite{Borland1972}.

%-----------------------------------------------------------------------------
\subsection{Summary of contributions}\label{sub:our_contributions}
%-----------------------------------------------------------------------------
We now present our main results, starting with the quantum complexity of plethysm coefficients.
We first give a brief definition of plethysms (see \cref{sec: problem description} for more detail).
Let $V \cong \C^n$ be a finite-dimensional complex vector space.
Recall that the irreducible polynomial complex representations of the general linear group~$\GL(V)$ are labeled by integer partitions with no more than~$n$ parts.
Given such a partition~$\nu$, we denote by $\rho_{\GL(V),\nu} \colon \GL(V) \to \GL(\irr{\nu}{\GL(V)})$ the corresponding irreducible representation, where $\irr{\nu}{\GL(V)}$ denotes the representation space, called \emph{Weyl module}.
If $\nu$ has more than $n$ parts, we define $\rho_{\GL(V),\nu}$ as the zero-dimensional representation.
We abbreviate~$H \coloneqq \GL(V)$, and consider the group $G \coloneqq \GL(\irr{\nu}{H})$ along with one of its polynomial representations~$\rho_{G,\mu}$, where $\mu$ is again an integer partition.
Then the composition
\begin{equation}\label{eq:composition}
    \rho_{G,\mu} \circ \rho_{H,\nu} \colon H \to \GL(\irr{\mu}{G})
\end{equation}
is a representation of~$H = \GL(V)$.
We shall denote the corresponding representation space by~$\{\mu\}\!\!\downarrow^{\GL(\irr{\nu}{H})}_H$ to remember that we regard it as a module of~$H$, by \emph{restriction} along the irreducible representation~$\rho_{H,\nu} \colon H \to G = \GL(\irr{\nu}{H})$.
In general, the representation~\eqref{eq:composition} is reducible and hence it decomposes into irreducible $H$-representations.
Thus:
\begin{align}\label{eq:def_pleth}
    \{\mu\}\!\!\downarrow^{\GL(\irr{\nu}{H})}_H
\;\cong\; \bigoplus_{\la} \underbrace{\{\la\}_H \oplus \dots \oplus \{\la\}_H}_{a^\lambda_{\mu,\nu}}
\;\cong\; \bigoplus_{\la} \{\la\}_H\ot\C^{a^\lambda_{\mu,\nu}}.
\end{align}
The \emph{plethysm coefficient}~$a^\la_{\mu,\nu}$ is defined as the multiplicity of the irreducible module~$\irr{\la}{H}$
in this decomposition.
Here we assume that $\la$ has no more than~$n = \dim(V)$ parts, in which case the coefficient does not depend on the choice of~$n$ but only on the three partitions~$\la, \mu, \nu$.

\begin{prob}[Plethysm coefficients]\label{prob: plethysm}\
\begin{description}
    \item[~~~Input:]
    Three partitions~$\lambda$, $\mu$, $\nu$, encoded as lists of positive integers, in unary.
    \item [~~~Task:]   Compute the plethysm coefficient $a^\la_{\mu,\nu}$.
\end{description}
\end{prob}

\noindent
Encoding the partitions in unary is a standard convention in the literature.
Thus the input size is~$\Theta(\abs\lambda + \abs\mu + \abs\nu)$.
We give two concrete examples of plethysm coefficients:
\begin{itemize}
\item If both $\mu=(m)$ and $\nu=(d)$ are partitions with a single part, then~$\{\mu\}\!\!\downarrow^{\GL(\{\nu\}_H)}_H = \Sym^m(\Sym^d(V))$ can be interpreted as the space of homogeneous polynomials of degree~$m$ in variables that are coefficients on the space of homogeneous polynomials of degree~$d$.
this space is central in geometric complexity theory and is the main object in the Foulkes conjecture~\cite{foulkes1950concomitants}.
\item If $\mu=(m)$ and $\nu=(1^d)=(1,\dots,1)$, then $\{\mu\}\!\!\downarrow^{\GL(\{\nu\}_H)}_H= \Sym^m(\Alt^d(V))$ corresponds to the space of polynomials in the coefficients of a wavefunction of $d$ fermions, each having Hilbert space~$V$.
This is directly relevant to the $N$-representability problem in quantum chemistry~\cite{AltunbulakKlyachko2008, Klyachko2005NRepresentability}.
\end{itemize}
The general case is a major problem in algebraic combinatorics as stated in Ref.~\cite[Problem 9]{stanley2000positivity}.
Prior work~\cite{ikenmeyer2023remark} showed that the former example is in $\#\BQP$, but already the latter example was out of reach for earlier methods.
Here we prove that the general plethysm problem is in $\#\BQP$:

\begin{thm}\label{thm:plethysm}
\cref{prob: plethysm} is in $\#\BQP$.
As a consequence, the problem of deciding if a plethysm coefficient is positive, $a^\lambda_{\mu\nu} > 0$, is in $\QMA$.
\end{thm}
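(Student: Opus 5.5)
The plan is to show that $a^\la_{\mu,\nu}$ equals the dimension of a subspace $\mathcal{W}$ of a Hilbert space on $\poly(\abs\la+\abs\mu+\abs\nu)$ qubits. The projector onto $\mathcal{W}$ will be a product of commuting projectors, each implementable by a polynomial-size quantum circuit. Membership in $\#\BQP$ then follows from the standard characterization: a $\#\BQP$ function can be given as the trace of an efficiently implementable projector, or equivalently as the number of accepted basis inputs of a verifier, via the maximally mixed input state. The $\QMA$ statement follows because $\mathcal{W}\neq 0$ exactly when some witness state is accepted with certainty, and polynomially small gaps suffice.

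\textbf{Step 1 (concrete model).} Take $V=\C^n$ with $n=\ell(\la)$; the coefficient does not depend on $n$ as long as $n\ge \ell(\la)$. Set $d=\abs\nu$, $m=\abs\mu$, and we may assume $\abs\la = md$. By Schur--Weyl duality $\irr{\nu}{H}$ embeds in $V^{\ot d}$ as the image of the Schur transform restricted to the $\nu$ isotypic block with a fixed $S_d$ multiplicity label. So $\irr{\nu}{H}\cong \Pi_\nu V^{\ot d}$ for an efficiently implementable projector $\Pi_\nu$, which uses the Schur transform polynomial in $\log n$ and $d$. Likewise, $\irr{\mu}{G}$ is a fixed-label piece of $(\irr{\nu}{H})^{\ot m}$ under the $S_m\times \U(\irr{\nu}{H})$ Schur--Weyl decomposition.

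The subtlety is that $\dim\irr{\nu}{H}$ can be exponential. This is exactly where the improved Schur transform with polylogarithmic dependence on the local dimension is needed. Encode $\irr{\nu}{H}$ by a register of $O(d\log n)$ qubits, namely the embedding in $V^{\ot d}$, and run the Schur transform of $S_m\times\U(N)$ on $m$ such registers with $N=\dim\irr{\nu}{H}$. This costs $\poly(m,\log N)=\poly(m,d,\log n)$. The result is an efficient projector $P_\mu$ onto a copy of $\irr{\mu}{G}$ inside $(V^{\ot d})^{\ot m}=V^{\ot md}$.

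\textbf{Step 2 (extract the $\la$ multiplicity).} The subspace $\mathcal{U}=P_\mu(\Pi_\nu^{\ot m}V^{\ot md})$ is $H$-invariant, since $H$ acts diagonally as $\rho_{H,\nu}^{\ot m}$, and it is isomorphic to $\{\mu\}\!\!\downarrow^{G}_H$. Apply the Schur transform of $S_{md}\times\GL(V)$ on $V^{\ot md}$, and let $Q_\la$ project onto the $\la$ isotypic component with a fixed basis vector in $\irr{\la}{H}$, for instance a highest weight Gelfand--Tsetlin vector. Then $\dim(\mathcal U\cap \mathrm{range}\,Q_\la)=a^\la_{\mu,\nu}$. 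For this I use that $Q_\la$ commutes with $P_\mu\Pi_\nu^{\ot m}$, because the latter commutes with the $H$ action, and that the isotypic component of $\mathcal U$ is $\irr{\la}{H}\ot\C^{a^\la_{\mu,\nu}}$. Fixing one vector in the first factor cuts the dimension to exactly $a^\la_{\mu,\nu}$. The count is then $\mathrm{tr}(Q_\la P_\mu \Pi_\nu^{\ot m})$, which is a trace of a product of commuting efficient projectors and hence in $\#\BQP$.

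\textbf{Main obstacle.} I expect the hardest part to be verifying that the projectors genuinely commute. In particular, $Q_\la$ is defined via the $\U(V)$ Gelfand--Tsetlin basis, and it must commute with $P_\mu$, which is defined via the $S_m\times\U(\irr{\nu}{H})$ structure. To handle this, I would replace the Gelfand--Tsetlin vector by the projector onto the $\la$ isotypic component tensored with a $\U(V)$-equivariant choice. Concretely, I would use the $S_{md}$ side instead: fix a standard-tableau basis vector of the $S_{md}$ irrep, which commutes with everything $\GL(V)$-equivariant. The remaining technical check is the complexity bookkeeping for the Schur transform on the non-qudit registers $\Pi_\nu V^{\ot d}$, namely that they can be encoded with polynomially many qubits with exact or exponentially accurate implementation.
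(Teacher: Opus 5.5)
Your Steps~1--2 are correct. They repackage the paper's argument rather than give a new one.

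\textbf{Comparison with the paper.} The paper takes the witness in $\{\mu\}_G$ itself and pushes it isometrically and $H$-equivariantly into $V^{\ot md}$. It does this with one inverse Schur transform for $S_m\times\GL(N)$, then $m$ inverse Schur transforms for $S_d\times\GL(n)$, after identifying the computational basis of $\C^{N}$ with the Gelfand--Tsetlin basis of $\{\nu\}_H$. It then performs strong Schur sampling and accepts on $(\la,p_0)$. You instead take all of $V^{\ot md}$ as witness space. You project onto the image of that same embedding by measuring $\Pi_\nu^{\ot m}$ and $P_\mu$, then apply the same strong Schur sampling $Q_\la$. The paper's appendix mentions exactly this ``projective measurement onto the subspace'' variant in its Kronecker example.

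The trade-offs are as follows. In the paper's version the witness space is smaller, and equivariance of the isometry plus Schur's lemma directly make the pulled-back acceptance operator a projector of rank $a^\la_{\mu,\nu}$. Yours needs a commutation check, but it makes the count a visible trace of commuting projectors and avoids encoding witnesses of $\{\mu\}_{\GL(N)}$.

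\textbf{Your encoding worry is harmless.} $P_\mu$ is the image of a Young--Yamanouchi matrix unit of $\C[S_m]$ acting by permuting the $m$ blocks. So it does not depend on how a block is encoded. You may run the transform of \cref{thm: high dimensional} on $(V^{\ot d})^{\ot m}$ with local dimension $n^d$, at cost $\poly(m,d\log n)$. Its restriction to the invariant subspace $(\Pi_\nu V^{\ot d})^{\ot m}$ is your $P_\mu$. This also shows immediately that $P_\mu$ commutes with $\Pi_\nu^{\ot m}$ and with $h^{\ot md}$.

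\textbf{Drop the fix in your ``main obstacle'' paragraph.} It is backwards, and the commutation you doubt already holds for the reason you gave in Step~2. By Schur--Weyl duality, every operator on $V^{\ot md}$ commuting with $\GL(V)$ has the form $\bigoplus_\ga \id_{\{\ga\}_H}\ot A_\ga$. This applies in particular to the projector $P_\mu\Pi_\nu^{\ot m}$, which is built from $\C[S_d]^{\ot m}$ and the block copy of $\C[S_m]$ inside $\C[S_{md}]$, so each $A_\ga$ is a projector. On the $\la$-block, $Q_\la$ is $\ket{p_0}\bra{p_0}\ot\id_{[\la]}$, and it is zero elsewhere. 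These operators commute. Moreover $\mathrm{tr}(Q_\la P_\mu\Pi_\nu^{\ot m})=\mathrm{tr}\,A_\la=a^\la_{\mu,\nu}$, because the $\la$-isotypic part of $\mathcal U\cong\{\mu\}\!\!\downarrow^G_H$ is $\{\la\}_H\ot\mathrm{range}(A_\la)$.

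Now consider a projector $\id_{\{\la\}_H}\ot\ket{q}\bra{q}$ that fixes a standard-tableau vector on the $S_{md}$ side. It lies in the noncommutative algebra $\id\ot\mathrm{End}([\la])$. It does not commute with $\id\ot A_\la$ unless $\ket q$ is an eigenvector of $A_\la$. The product is then not a projector, and its trace $\dim\{\la\}_H\,\bra{q}A_\la\ket{q}$ is not $a^\la_{\mu,\nu}$; it need not even be an integer. Fixing the vector on the $\GL$ side, as in your Step~2 and in the paper, is exactly what makes the acceptance operator a projector of the right rank. This is the property the $\#\BQP$ definition needs.
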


\noindent
The corresponding quantum algorithm is sketched in \cref{fig:plethysm} and we describe its main ideas in \cref{sub:methods}.
The above also implies an analogous result for the \emph{restriction coefficients}~$r_\lambda^\mu$, defined as the multiplicy of an irreducible $S_n$-representation (Specht module) in the restriction of an irreducible~$\GL(n)$-representation (Weyl module) to the subgroup $S_n \subset \GL(n)$ of permutation matrices.
Indeed, the restriction coefficients can be expressed as a positive sum of plethysms: using symmetric functions (see \cref{sec:notation} for definitions), we have $r_\lambda^\mu=\langle s_\mu, s_\lambda[1+h_1+h_2+\cdots]\rangle$.
Thus, computing restriction coefficients is in~$\#\BQP$ and their positivity is in~$\QMA$, as an immediate corollary of \cref{thm:plethysm}.

We obtain \cref{thm:plethysm} by specialization of a general theorem that applies to a broad class of multiplicities known as \emph{branching multiplicities (or coefficients)}.
They are defined as follows:
We consider groups~$G$ and $H$, together with a group homomorphism $\phi\colon H\to G$.
Any representation~$\pi \colon G \to \GL(W)$ of~$G$ then \emph{restricts} to a representation of~$H$ by composition with~$\phi$, i.e., $\pi \circ \phi \colon H \to \GL(W)$.
The corresponding branching multiplicities are defined as the multiplicities of irreducible $H$-representation in the restriction~$\pi \circ \phi$.
Here we consider branching multiplicites in the setting where~$G$ and $H$ are \emph{products of general linear groups}, and where the group homomorphism~$\phi$ and the~$G$-representation~$\pi$ are specified succinctly in terms of representation-theoretic data.
See \cref{prob:branching_computational} for the formal definition.
This encompasses, among others, all coefficients studied in prior work: Kostka numbers, LR coefficients, and Kronecker coefficients, see \cref{sec: representation-theoretic}
(of course, the first two are even in $\#\P \subset \# \BQP$, while for the latter this is an open problem).
It is known that computing branching multiplicities can be done in classical polynomial time when~$G$ and~$H$ are \emph{fixed}.
This holds not just for products of general linear groups, but more generally in the setting of complex reductive (or compact connected) Lie groups~\cite{CDW}.
In contrast, here we allow the groups (as well as the representations) to be part of the input.
This is substantially more difficult since it involves representations that in general have dimensions that are \emph{exponential} in the input size.

\begin{thm}\label{thm: general}
    Computing branching multiplicities for products of general linear groups (\cref{prob:branching_computational}) is in $\#\BQP$.
    As a consequence, deciding their positivity is in $\QMA$.
\end{thm}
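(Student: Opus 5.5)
The plan is to build a polynomial-size quantum circuit whose accepting subspace has dimension exactly the branching multiplicity, possibly up to a known scaling. Recall that $\#\BQP$ functions can be written as the trace of a projector implemented by a polynomial-time quantum verifier acting on a polynomial number of witness qubits. The model case is the plethysm problem of \cref{thm:plethysm}, with $G = \GL(\irr{\nu}{H})$ and $H=\GL(V)$; the general case of \cref{prob:branching_computational} (products of general linear groups, $\phi$ given by representation-theoretic data) should follow by applying the same construction factor by factor.

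\textbf{Step 1 (preparing the $G$-module).} Realise $\irr{\mu}{G}$ inside $(\C^{D})^{\ot m}$, where $D=\dim\irr{\nu}{H}$ and $m=\abs{\mu}$, using Schur--Weyl duality. Choose $n=\ell(\la)$, or polynomially bounded. Then $\irr{\nu}{H}$ is encoded in a register of $\poly$ qubits via the Gelfand--Tsetlin basis, so $\log D=\poly$. Apply the Schur transform on $m$ copies of $\C^D$ and postselect the $S_m$-irrep label $\mu$ (more precisely, project onto the isotypic component $\mu$). This gives $\irr{\mu}{G}\ot[\mu]$. Here the Schur transform must run in time polylogarithmic in the local dimension $D$; this is exactly the improved-dimension-dependence result the abstract mentions, and I will invoke it.

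\textbf{Step 2 (exposing the $H$-action).} Each $\C^D\cong\irr{\nu}{H}$ is embedded into $V^{\ot d}$ with $d=\abs{\nu}$. This uses an inverse Schur transform on $d$ copies of $V$, fixing an arbitrary state in the Specht-module register $[\nu]$. After this, the whole space sits inside $V^{\ot md}$, on which $H$ acts diagonally and equivariantly, because the embedding $\irr{\nu}{H}\to V^{\ot d}$ is $H$-equivariant. A final Schur transform on $V^{\ot md}$ decomposes this space as $\bigoplus_\la \irr{\la}{H}\ot[\la]$.

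\textbf{Step 3 (counting).} The verifier accepts the witness if the following all hold: the first Schur transform yields label $\mu$, the first $[\mu]$-register is in a fixed basis state (say the first standard tableau), the final label is $\la$, and the final $\irr{\la}{H}$-register holds a fixed highest-weight Gelfand--Tsetlin state. Because the accepted operator is a projector commuting with the relevant actions, its trace equals $a^\la_{\mu,\nu}\cdot 1\cdot 1$. Concretely, the witness ranges over $\irr{\mu}{G}$ states tensored with a fixed tableau. Rather than postselecting, I would restrict the witness space and compose isometries. The projector is then $P = W^\dagger \Pi_{\la,\mathrm{hw}} W$, where $W$ is the equivariant isometry $\irr{\mu}{G}\to V^{\ot md}\ot(\text{ancilla})$. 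Its trace is $\dim\mathrm{Hom}_H(\irr{\la}{H},\irr{\mu}{G}\restr)=a^\la_{\mu,\nu}$. The $\QMA$ consequence is standard: given any $\#\BQP$ function, test positivity by sampling the witness.

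\textbf{Main obstacle.} The hard part is ensuring $W$ is genuinely an isometry implementable exactly, not just approximately, while being equivariant. A second difficulty is fitting the projector formalism of $\#\BQP$, where the acceptance operator must have eigenvalues bounded away from $\{0,1\}$ gaps. I would handle this by ensuring all Schur transforms are exact up to exponentially small error, and by using gap-amplification of the resulting nearly-projective operator.
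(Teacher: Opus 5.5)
Your plethysm construction is the paper's own: inverse Schur transforms realize $\irr{\mu}{G}\hookrightarrow(\irr{\nu}{H})^{\ot|\mu|}\hookrightarrow V^{\ot|\mu||\nu|}$, and the high-dimensional Schur transform keeps this efficient even though $D=\dim\irr{\nu}{H}$ is exponential. Accepting on $(\la,p_0)$ after a final Schur transform then gives a projector of rank $a^\la_{\mu,\nu}$, by the commutation argument you sketch. The gap is the sentence that the general case ``should follow by applying the same construction factor by factor.'' That is exactly where \cref{thm: general} goes beyond \cref{thm:plethysm}, and the plethysm construction does not carry over verbatim. In \cref{prob:branching_computational} the homomorphism is given only through $\C^{d_i}\restr\cong\bigoplus_{j=1}^{t_i}\irr{\vnu(i,j)}{H}\ot\C^{m_{ij}}$. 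This module is in general reducible and has multiplicities, and each summand is an outer tensor product $\bigotimes_k\irr{\nu^k(i,j)}{n_k}$ over several $\GL(n_k)$ factors. So there is no single equivariant embedding of $\C^{d_i}$ into a fixed $\bigotimes_k(\C^{n_k})^{\ot c_k}$: how many copies of each $\C^{n_k}$ you need depends on the summand $j$. The Kostka instance already shows this: there $H=\GL(1)^d$ and $\C^d\restr=\bigoplus_j e_j$, so a vector in summand $j$ feeds only factor $j$. You also have to say how the outputs of different $G$-factors are regrouped by $\GL(n_k)$ before the final Schur sampling, as in the Littlewood--Richardson instance; ``factor by factor'' does not specify this.

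Four things are missing.
\begin{enumerate}
\item[(i)] Use \cref{lem:defining representations} to replace $\phi$ by the homomorphism of \cref{def:specification}. Then the computational basis of $\C^{d_i}$ is labelled by the summand $j$, Gelfand--Tsetlin patterns for the $\nu^k(i,j)$, and a multiplicity index. Without this, the $H$-action on the output of the first inverse Schur transforms is not accessible.
\item[(ii)] Measure, or coherently branch on, the summand labels $j_{i,s}$ of all $\sum_i|\mu^i|$ registers. Then embed each $\irr{\vnu(i,j_{i,s})}{H}$ into $\bigotimes_k(\C^{n_k})^{\ot|\nu^k(i,j_{i,s})|}$, carrying along the $H$-trivial register $\C^{m_{i,j_{i,s}}}$. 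Finally regroup by $k$ into $\bigotimes_k(\C^{n_k})^{\ot N_k}$, with $N_k$ depending on the outcomes, before the $\h$ final Schur transforms.
\item[(iii)] Show that this intermediate step does not change the count. The summand decomposition is $H$-invariant, so its projectors commute with the $H$-equivariant final acceptance projector. The accepted space is therefore the orthogonal sum over outcome tuples $J$ of $\ket{\vec{p}}\ot M_J$, where $M_J$ is the multiplicity space of $\irr{\vla}{H}$ in the tensor-product summand labelled by $J$. Its total dimension is $m_{\vmu,\vla}$.
\item[(iv)] Check that $\log d_i$ is polynomial in the input size, since $d_i=\sum_j m_{ij}\prod_k\dim\irr{\nu^k(i,j)}{n_k}$ is only given implicitly.
\end{enumerate}
Without these steps your argument establishes \cref{thm:plethysm}, not \cref{thm: general}. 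By contrast, the difficulty you single out as the main obstacle is routine. Schur transforms accurate to exponentially small error give an acceptance operator exponentially close to an exact projector, which already meets the $2/3$ versus $1/3$ requirement of \cref{def: count BQP}. The real work is handling the reducible restriction of the defining modules.
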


\noindent
We prove \cref{thm: general} as \cref{thm: plethysm final} by giving a quantum algorithm that is perfectly sound and complete (if the gate set is such that the circuit can be implemented perfectly), similarly to the algorithms in Refs.~\cite{bravyi2024quantum,ikenmeyer2023remark}.
In particular, this shows that the positivity problem is in~$\QMA_1$ if there exists a universal gate set such that the high-dimensional version of the quantum Schur transform on $(\C^d)^{\otimes N}$ can be implemented exactly for any choice of~$N$ and~$d$.
We discuss the main technical ideas behind our quantum algorithms in \cref{sub:methods}.
Moreover, in \cref{app: A}, we describe a general approach for showing that representation-theoretic multiplicities are in $\#\BQP$ that subsumes the quantum algorithms of this paper as well as those of the prior works~\cite{bravyi2024quantum, ikenmeyer2023remark} and explains their relation.

As discussed, the plethysm coefficients and in fact most representation-theoretic coefficients are $\#\P$-hard, and it is a central open question to determine when they are in $\#\P$, see Refs.~\cite{Pak22,panova2023computational}.
By \cref{thm:plethysm}, a negative answer to this question would imply a breakthrough separation between the complexity classes~$\#\P$ and~$\#\BQP$.
To gain further insight into their computational complexity, we complement our quantum results by showing that the branching multiplicities are also in~$\GapP$, see \cref{sec: classical}, extending earlier results in Refs.~\cite{burgisser2008complexity,PPcomp,pak2024signed}.

\begin{thm}\label{thm:intro gapp}
Computing branching multiplicities for products of general linear groups (\cref{prob:branching_computational}) is in~$\GapP$.
\end{thm}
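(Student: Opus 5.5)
The plan is to write the branching multiplicity as a signed sum of polynomially-checkable terms using characters. Let $G=\prod_i \GL(n_i)$ and $H=\prod_j \GL(m_j)$. We take $\phi$ and $\pi$ to be specified, as in \cref{prob:branching_computational}, as compositions, tensor products and direct sums of irreducible polynomial representations labeled by partitions given in unary. The multiplicity of $\{\lambda\}_H$ in $\pi\circ\phi$ is then $\langle s_\lambda, \mathrm{ch}(\pi\circ\phi)\rangle$. The character $\mathrm{ch}(\pi\circ\phi)$ is a polynomial in the torus variables $y$ of $H$, obtained by substituting the weights of $\phi$ into the character of $\pi$.

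We extract this coefficient with the Jacobi--Trudi / Weyl character formula in alternant form. Multiplying by the Vandermonde $\prod_{a<b}(y_a-y_b)$ of each $\GL(m_j)$ factor turns $s_\lambda$ into an alternant. The multiplicity then equals the coefficient of the monomial $y^{\lambda+\delta}$ in $\Delta(y)\cdot \mathrm{ch}(\pi\circ\phi)(y)$. Expanding $\Delta$ gives $\sum_{\sigma}\mathrm{sgn}(\sigma)\,[y^{\lambda+\delta-\sigma(\delta)}]\,\mathrm{ch}(\pi\circ\phi)$. This is a signed sum over exponentially many permutations, each with a polynomial-time computable sign, so it suffices to show that each weight multiplicity of $\pi\circ\phi$ lies in $\#\P$. $\GapP$ is closed under exponential sums of this form and under products and differences, so the whole expression then lands in $\GapP$.

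For the weight multiplicities, write $\mathrm{ch}(\pi)=\prod s_{\mu}(x)$ (or sums of such) in the variables $x$ of $G$. Each Schur polynomial is a sum over semistandard tableaux, so its weight multiplicities are Kostka numbers, which lie in $\#\P$. The substitution $x\mapsto$ (weights of $\phi$) means every variable $x_k$ of $G$ is replaced by a monomial $y^{w_k}$, where $w_k$ ranges over the weights of $\phi$. These weights are in turn indexed by semistandard tableaux of the partitions defining $\phi$, possibly with repetition.

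The witness for a weight multiplicity is therefore a tuple consisting of a tableau $T$ of shape $\mu$ filled with basis labels of $\{\nu\}_H$, where each label is itself a semistandard tableau of shape $\nu$, together with the check that the total content equals the target weight. Verifying this tuple takes polynomial time even though the label alphabet is exponentially large. The reason is that semistandardness of $T$ only needs a fixed total order on labels that can be compared efficiently, for example lexicographic order on tableaux, since the character does not depend on the chosen basis ordering. This is the $\#\P$ step, and the overall alternating sum gives \cref{thm:intro gapp}.

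The main obstacle is making this uniform for arbitrary nesting, products, and direct sums in \cref{prob:branching_computational}. One must ensure that the exponentially large alphabets, coming from $\dim\{\nu\}_H$ being exponential, never need to be enumerated. Instead they must be handled through polynomial-size tableau witnesses with efficient comparison, for every level of composition allowed. One must also check that the unary input sizes bound all tableau sizes polynomially.
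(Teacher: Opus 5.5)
Your proposal is correct and is essentially the paper's argument. The paper also expands the restricted character into monomials (via Kostka numbers and plethystic substitution) and extracts the Schur coefficient as an alternating sum over permutations, using Jacobi--Trudi with $\langle m_\alpha,h_\beta\rangle=\delta_{\alpha\beta}$; this gives the same formula as your alternant. Your tableau-of-tableaux witness with a lexicographic order on labels makes explicit the $\#\P$ step that the paper only asserts.

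Two small remarks. First, the input of \cref{prob:branching_computational} involves only one level of composition: an irreducible $G$-module restricted along direct sums of irreducible $H$-modules with multiplicities $m_{ij}$. So the nesting issue you flag does not arise. Second, the ranks are given in binary, so you should note that only permutations fixing every index beyond $\ell(\lambda^k)$ contribute, because all weights are nonnegative. The sum therefore effectively runs over $S_{\ell(\lambda^k)}$, and all witnesses stay of polynomial size.
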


\noindent
We prove \cref{thm:intro gapp} as \cref{thm:general_gapp}.
We also present a more specialized algorithm for computing the plethysm coefficients, see \cref{thm:pleth_classical}.
In particular, we obtain a polynomial time algorithm when the number of parts of~$\lambda$ and the size of~$\mu$ are fixed, extending the results of Ref.~\cite{panova2025polynomial}.

\subsection{Methods}\label{sub:methods}

\begin{figure}[t]
\centering\scalebox{0.87}{\begin{tikzpicture}
    \node(0,0){\begin{quantikz}[wire
types={n,n,n,n,q,n,n,n,n}]
&\gategroup[wires=9, steps=8, style={dashed, rounded corners, fill=blue!20, fill opacity=.3, inner sep=5pt}]{Embedding} &&&&&\lstick{\footnotesize$\ket{\nu}$}&\setwiretype{q}&\gate[3][3cm]{\mathcal{S}^{-1}}\gateoutput[3]{$|\nu|$}&\setwiretype{n} &\gate[9][4cm]{\mathcal{S}}\gateinput[9]{$|\nu|\cdot|\mu|$}\gategroup[wires=9, steps=2, style={dashed, rounded corners, fill=green!30, fill opacity=0.3, inner sep=5pt}]{Strong Fourier sampling}&
\\
&&\lstick{$\ket{\mu}$}&\setwiretype{q}& \gate[7][2.5cm]{\mathcal{S}^{-1}}\gateoutput[7]{$|\mu|$} &&&&&\qwbundle{}&&\meter[1]{}&
\\
&&&&&&\lstick{\footnotesize$\ket{\psi_\nu}$}&\setwiretype{q}&&\vdots\setwiretype{n}&&\\
\\
\lstick{witness} &&&&&\setwiretype{n}&&&\vdots&\vdots&&\meter[1]{}\setwiretype{q}& \\
\\
&&&&&&\lstick{\footnotesize$\ket{\nu}$}&\setwiretype{q}&\gate[3][3cm]{\mathcal{S}^{-1}}\gateoutput[3]{$|\nu|$} &\vdots\setwiretype{n}&&
\\
&&\lstick{$\ket{\psi_\mu}$}&\setwiretype{q}&&&&&&\qwbundle{}&&&
\\
&&&&&&\lstick{\footnotesize$\ket{\psi_\nu}$}&\setwiretype{q}&&\setwiretype{n}&&
\end{quantikz}};;
\end{tikzpicture}}
    \caption{Quantum circuit for the plethysm coefficient $a^{\la}_{\mu,\nu}$ (\cref{thm:plethysm}).
    The witness is input in the space $\irr{\mu}{\GL(\irr{\nu}{H})}$.
    The first inverse Schur transform embeds it into $\abs\mu$ registers, each isomorphic to $\irr{\nu}{H}$.
    The second layer of inverse Schur transforms embeds each register into~$\abs\nu$ registers of dimension~$n$.
    After the final Schur transform, we measure the registers corresponding to the partition and the Weyl module.
    We accept if and only if the former returns outcome~$\la$ and the latter~$p_0$, where $\ket{p_0}\in\irr{\la}{H}$ is an arbitrary fixed basis state of the Weyl module.
    }\label{fig:plethysm}
\end{figure}
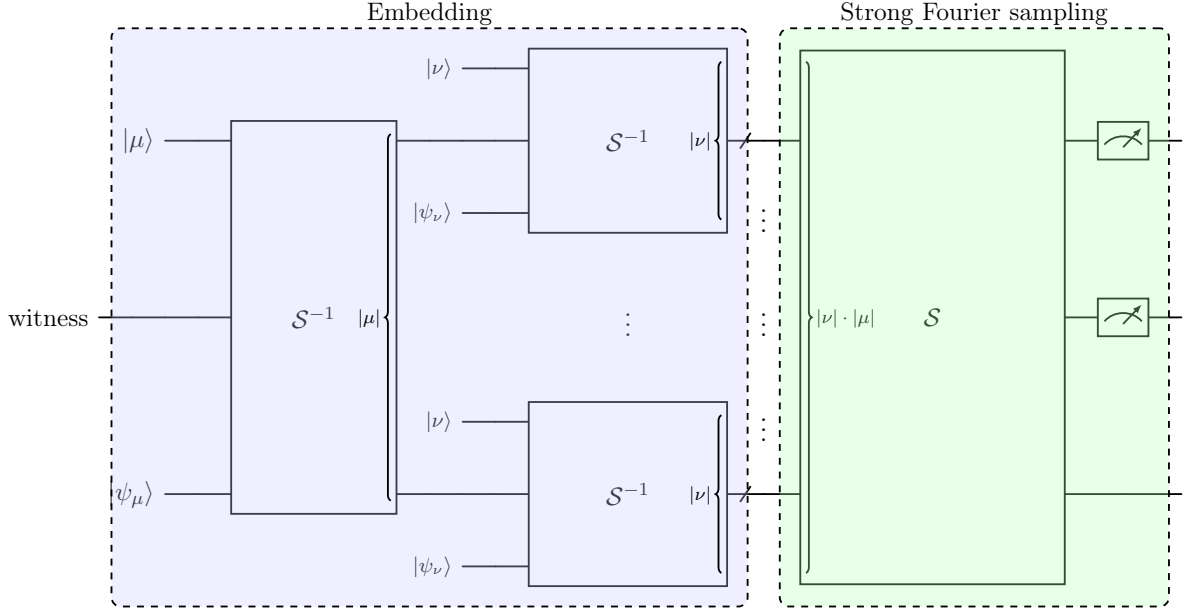

We briefly discuss the main ideas behind our results, focusing on the special case of plethysm coefficients for concreteness (\cref{thm:plethysm}).

\paragraph{Quantum algorithm for plethysm coefficients}
The algorithm in this case is sketched in \cref{fig:plethysm} and formally described in \cref{subsec: plethysm algorithm}.
Let $\la,\mu,\nu$ be partitions and~$V = \C^n$.
We may assume that the number of parts of the partitions and~$n$ are such that the corresponding Weyl modules are nonzero, otherwise the plethysm coefficient~$a^\la_{\mu,\nu}=0$.
Our approach uses Schur-Weyl duality to embed the $H$-module $\{\mu\}\!\!\downarrow^G_H$ inside~$V^{\ot \abs\nu \cdot \abs \mu}$, where $G = \GL(\irr{\nu}{H})$ and $H = \GL(V)$ as before.
This is achieved as follows:
We first embed the $G$-module $\{\mu\}_G$ in a tensor power of $\irr{\nu}{H}$, which is the defining module of~$G = \GL(\irr{\nu}{H})$.
We then consider each $\irr{\nu}{H}$ as a~$H$-module and embed it into a tensor power of $V$, which is the defining module of~$H = \GL(V)$.
Since $\{\mu\}\!\!\downarrow^G_H$ is nothing but the restriction of the $G$-module $\{\mu\}_G$ to~$H$, we obtain the following chain of $H$-equivariant inclusions
\begin{align}\label{eq:pleth emb}
    \{\mu\}\!\!\downarrow^G_H \;\hookrightarrow\; \parens*{ \irr{\nu}{H} }^{\ot \abs\mu} \;\hookrightarrow\; V^{\ot|\nu|\cdot|\mu|}.
\end{align}
In order to implement these embeddings, we employ \emph{inverse} quantum Schur transforms, as these allow us to map from the \emph{Schur basis} of the tensor power spaces to the standard basis (blue box in \cref{fig:plethysm}).
After the embeddings, we implement a final (forward) quantum Schur transform on all~$\abs\nu \cdot \abs\mu$ registers and measure in the Schur basis -- accepting if and only if we obtain the partition~$\la$, meaning that the input state was in the $\la$-isotypic component of $\{\mu\}\restr$, and if we obtain an arbitrary but fixed basis state~$p_0$ on the Weyl module register (green box in \cref{fig:plethysm}).
As inclusions are $H$-equivariant, they preserve the dimension of the isotypic component, and they act as the identity on the Weyl module. As such, the entire circuit implements a projective measurement, and the dimension of the space of witnesses that are accepted with probability one (assuming the quantum Schur transform is implemented exactly) is exactly the multiplicity of~$\irr{\la}{H}$ in~$\{\mu\}\restr$.
By \cref{eq:def_pleth}, this multiplicity is exactly the plethysm coefficient~$a_{\mu,\nu}^\la$.

\paragraph{Generalization to branching multiplicities}
Our result for the branching problem for products of general linear groups (\cref{prob:branching_computational,thm: general}) is based a natural generalization of this idea.
The quantum algorithm is described in \cref{sec: verifier} and analyzed in \cref{thm: plethysm final}.
One conceptual contribution is to identify an appropriate definition of the computational problem.
This is subtle, as it involves objects that will in general be exponentially large and hence need to be succinctly specified in order to obtain a nontrivial computational problem that captures the plethysm coefficients and other classical coefficients as special cases (see \cref{prob:branching_computational,prob:schur_branching}).
On the algorithmic side, we employ a similar sequence of embeddings as before, this time on products of tensor powers of the defining modules, and implemented by a sequence of parallel quantum circuits.
In our general setting, the defining modules for~$G$ need no longer be irreducible as $H$-modules, requiring us to implement intermediate measurements in order to realize the subsequent embeddings.
See \cref{sec: verifier} for further details.

\paragraph{Comparison with prior work}
Previous results on the quantum computation of Kronecker and special plethysm coefficient employed the formulation of these coefficients in terms of the symmetric group~$S_n$ and rely on Beals' quantum Fourier transform for $S_n$~\cite{beals_quantum_1997}.
In principle, any plethysm coefficient can be computed as branching multiplicities of representations of wreath products of symmetric groups.
However, as pointed out in Ref.~\cite{LH24}, a direct use of Beals' quantum Fourier transform leads to an exponential blowup in general.
As such it is not immediately clear how to generalize the symmetric group approach to the general plethysm problem (\cref{prob: plethysm}).
We overcome this challenge by working with the representation theory of the general linear group.
In this regard we may view this approach as the ``Schur-Weyl dual'' of prior approaches to treat representation-theoretic multiplicities.
We emphasize that the local dimensions in our setting (such as the dimension of~$\irr{\nu}{H}$) can be \emph{exponentially large}.
Accordingly, we require a version of the quantum Schur transform that only depends polylogarithmically on the local dimensions.
This has recently been achieved in Ref.~\cite{burchardt_high-dimensional_2025}, see also Ref.~\cite{harrow2005}.
In contrast, for the Kronecker coefficients the relevant local dimension are polynomial and hence the original quantum Schur transform suffices.
We discuss quantum algorithms for the Schur transform in more detail in~\cref{subs: schur transform}.

\paragraph{Unified approach towards multiplicities}
Despite the difference between these approaches there is a unified picture that captures both and which applies even more broadly, to the general problem of the multiplicity of an irreducible $H$-module~$V_\la$ in an~$H$-module $V$, for some group~$H$.
Consider the following two-step procedure:
\begin{enumerate}[noitemsep]
    \item Embed the $H$-module~$V$ we wish to decompose into an $H$-module $V_{\text{model}}$.
    \item Apply strong Fourier sampling for $V_{\text{model}}$ and accept if the outcome consists of is the label of the irreducible representation $V_{\la}$, together with a fixed basis state $\ket{p}\in V_{\la}$.
\end{enumerate}
It is clear that if these two steps can be efficiently implemented, then the multiplicity problem is in~$\#\BQP$, and the corresponding decision problem is in~$\QMA$.
Remarkably, this captures both the approach of this work and the prior works:
\begin{itemize}
\item For the plethysm problem, $V = \{\mu\}\!\!\downarrow^G_H$ and $V_\text{model} = (\C^n)^{\ot \abs\nu \cdot \abs\mu}$.
The first (embedding) step is achieved by the two layers of inverse Schur transforms (blue box in \cref{fig:plethysm}), while the second (strong Fourier sampling) step is achieved by the (forward) Schur transform and subsequent measurements (green box in \cref{fig:plethysm}).
Our more general algorithm for the branching problem of products of general linear groups likewise fits this picture.
\item The key ingredient of Refs.~\cite{bravyi2024quantum,ikenmeyer2023remark} is the \emph{generalized phase estimation} circuit~\cite{harrow2005}, which likewise admits a natural interpretation in this framework.
Here, $V_\text{model} = \C[H] \ot V$, where~$\C[H]$ is the group ring and $H$ acts on $\C[H]$ but not on $V$, the first (embedding) step is realized using the controlled group action of~$V$ (that is, the unitary $\ket h \ket \psi \mapsto \ket h \ket{h \cdot \psi}$ on $\C[H] \ot V$), and the second (strong Fourier sampling) step is implemented using the Fourier transform of the group.
\footnote{In Ref.~\cite{bravyi2024quantum} the authors use \emph{weak} Fourier sampling and therefore prove membership in $\#\BQP$ for a \emph{scaled} version of the Kronecker coefficients (namely, the Kronecker coefficients multiplied by the dimension of corresponding irreducible representations).}
This procedure applies whenever the controlled group action and the Fourier transform of the group can be efficiently implemented, see \cref{cor:gpe bqp}.
\end{itemize}
We can also combine these two ideas.
For example, the restriction coefficients~$r_\lambda^\mu$ can be captured by an embedding step that first uses an inverse Schur transform (as in the first bullet point) to realize~$\irr{\la}{\GL(n)}$ in $V = (\C^n)^{\ot \abs\la}$, following by GPE (as in the second bullet point), noting that the controlled group action of~$S_n$ on~$\C^n$ can be efficiently implemented.
We expand on this general perspective in \cref{app: A}.

\section{Preliminaries and problem description}\label{sec: problem description}
In this section we provide the notation and definitions that we use throughout this manuscript, as well as two definitions of the general branching problem.

\subsection{General notation}\label{sec:notation}

We will use standard notions from combinatorial representation theory and refer the reader to Refs.~\cite{Sag,EC2,JK} for background.

\paragraph{Partitions}

A \emph{partition} of a positive integer $n$ into $d$ parts is denoted $\lambda \vdash_d n$, is a weakly decreasing sequence of $d$ non-negative integers $\lambda=(\lambda_1,\ldots,\lambda_d)$ such that $\lambda_1+\cdots+\lambda_d=n$. We write $|\la| = \la_1+\cdots +\la_d$ for the size of $\la$ and $\ell(\la)$ for the number of nonzero parts (note that $\ell(\la) \leq d)$. A partition is often depicted visually as a \emph{Young diagram}. We note that in our use of partitions and diagrams, we will have to keep track of the zero parts (zero rows) to be consistent with their role as \emph{highest weights}.
The set of partitions of $n$ with at most $d$ parts is denoted by $\mathcal{P}_{d}(n):=\{\la\vdash_d n\}$. The set of all partitions with at most $d$ parts  is $\mathcal{P}_d=\cup_{n>0}\mathcal{P}_d(n)$. We let $\mathcal{P}(n)$
be the set of all partitions of $n$ (disregarding trailing 0s), and $\mathcal{P}=\cup_{n \geq 0} \mathcal{P}(n)$ be the set of all possible partitions.

\paragraph{Irreducible modules}

Let $\GL(d)$ be the general linear group of $d \times d$ complex matrices. Its irreducible modules are the \emph{Weyl modules}, indexed by dominant weights. Throughout the manuscript, we restrict to polynomial representations, in which case the dominant weights can be identified with partitions $\la \in \mathcal{P}_d$. We will denote the irreducible Weyl module indexed by $\la$ as $\{\la\}_d$.

For a tuple $\vec{\la}=(\la^1,\dots,\la^r)\in \mathcal{P}_{d_1}\times\cdots\times \mathcal{P}_{d_r}$, the corresponding irreducible $\GL(d_1)\times\cdots\times\GL(d_r)$-module is denoted $\{\vec{\la}\}_{d_1,\dots,d_r} = \{\la^1\}_{d_1}\ot\cdots\ot\{\la^r\}_{d_r}$, or simply $\irr{\vla}{G}$ if $G:=\GL(d_1)\times\cdots\times\GL(d_r)$.

The irreducible modules of the symmetric group $S_n$ are the \emph{Specht modules}, indexed by $\la \in \mathcal{P}(n)$. We will denote them by $[\la]$. The characters of $S_n$ are denoted by $\chi^\lambda$ and are evaluated at partitions $\alpha\in \mathcal{P}(n)$, which correspond to the conjugacy classes of $S_n$, denoted $C_\alpha$, consisting of all permutations of cycle type $\alpha$.

\paragraph{Symmetric functions and tableaux}

The characters of $\{\la\}_d$ are the \emph{Schur polynomials} $s_\lambda(x_1,\ldots,x_d)$. As $d \to \infty$, these become the Schur functions, which form a basis for the ring of symmetric functions $\Lambda$. Other notable bases for this ring are the homogeneous $\{h_\lambda\}$, elementary $\{e_\lambda\}$, power sum $\{p_\lambda\}$, and monomial symmetric functions $\{m_\lambda\}$, with $\lambda \in \mathcal{P}(n)$ for the degree $n$ component. The Schur functions can be defined as the generating function of \emph{semistandard Young tableaux (SSYT)}, which are assignments of integers in the Young diagram of $\la$ that are weakly increasing along rows and strictly increasing down columns. See Refs.~\cite{EC2,Mac} for background on symmetric functions.

The dimension of the Weyl module $\irr{\la}{d}$ is the number $d_\la$ of semistandard Young tableaux of shape $\la$ with entries in $\{1,\dots,d\}$, is a polynomial in $d$ of degree $|\la|$, and can be efficiently (in $\log(d)$ and $|\la|$) computed via the \emph{hook-content formula}:
\begin{equation}\label{eq: hookcontent}
\dim(\{\la\}_d) = \prod_{(i,j) \in \la} \frac{d+c_\la(i,j)}{h_\la(i,j)},
\end{equation}
where the product is over all boxes $(i,j)$ in the Young diagram of $\lambda$, $c(i,j) = j-i$ is the content of the box and $h(i,j)$ is the hook length of the box. It is upper-bounded by $d^{|\la|}$ and $|\la|^{d(d-1)/2}$.

The dimension of the Specht module $[\la]$ is given by the number $f^\la$ of \emph{standard Young tableaux (SYT)}, which are SSYT of shape $\la$ with each integer $1,\ldots,|\la|$ appearing exactly once, and can be efficiently computed via the \emph{hook length formula}:
\begin{equation}\label{eq: hooklenght}
    f^\la=|\la|!\prod_{(i,j)\in\la}\frac{1}{h_{\la}(i,j)}.
\end{equation}

\subsection{Main representation-theoretic multiplicities}\label{subs:rep theor multipl}
We give equivalent definitions for the main representation-theoretic multiplicities, in terms of multiplicities of irreducible representations as well as in term of decomposition of symmetric functions.
\paragraph{Plethysm coefficients}
The \emph{plethysm coefficients} $a^\lambda_{\mu,\nu}$ were originally defined in Ref.~\cite{Lit58} as the multiplicity of the irreducible $\GL(n)$-module $\{\la\}_n$ in the plethystic composition of $\GL$-modules indexed by $\mu$ and $\nu$, as in~\Cref{eq:def_pleth}, whenever $n\ge\max(\ell(\la),\ell(\nu))$ and $\dim\irr{\nu}{n}\ge \ell(\mu)$.

Understanding decompositions of representations is often done through characters in practice, as the ring of symmetric functions captures all relevant information and can allow us to extract multiplicities as coefficients in polynomial expansions.
In the language of symmetric functions, see Refs.~\cite{EC2,Mac}, the plethysm of two symmetric functions $f$ and $g$, where $g$ has nonnegative integer coefficients and so can be written as a sum of monomials $g=x^{\al^1} + x^{\al^2}+\cdots$, is defined as $f[g]=f(x^{\al^1},x^{\al^2},\cdots)$. The plethysm of Schur functions then decomposes as:
\[
s_\mu[s_\nu] = \sum_\lambda a^\lambda_{\mu,\nu} s_\lambda.
\]
In terms of symmetric group characters, the plethysm coefficients can be computed using the formula:
\[
a^\lambda_{\mu,\nu} = \sum_{\alpha \vdash m}\sum_{\gamma^1,\gamma^2, \ldots \vdash k} \frac{\chi^\mu(\al)}{z_\al} \frac{\chi^{\nu}(\gamma^1)}{z_{\gamma^1}} \frac{\chi^\nu(\gamma^2)}{z_{\gamma^2}} \cdots \chi^\la( \al_1\gamma^1, \al_2\gamma^2,\cdots),
\]
where the sum is over partitions $\al,\gamma^1,\gamma^2,\ldots$, the last character is evaluated at the partition obtained by scaling each partition $\gamma^i$'s parts by $\al_i$ and concatenating those integers to form a partition of $n=mk$. Here $z_\theta = \prod i^{m_i} m_i!$ where $\theta = (1^{m_1},2^{m_2},\ldots)$.

\paragraph{Kronecker coefficients}
The \emph{Kronecker coefficients} $g_{\la,\mu,\nu}$, for $|\la|=|\mu|=|\nu|$, are defined as the multiplicities of the $\GL(d_1)\times\GL(d_2)$ irreducible module $\irr{\nu}{d_1}\ot\irr{\la}{d_2}$ in the $\GL(d_1\cdot d_2)$ module $\irr{\mu}{d_1\cdot d_2}$.
\[
\{\mu\}_{d_1\cdot d_2} \downarrow^{\GL(d_1\cdot d_2)}_{\GL(d_1) \times \GL(d_2)} = \bigoplus_{\substack{\nu, \la}}  \{\nu\}_{d_1} \otimes \{\la\}_{d_2} \ot \C^{g_{\lambda, \mu, \nu}}.
\]
where the restriction is given by the inclusion $\GL(d_1)\times\GL(d_2)\rightarrow\GL(d_1\cdot d_2): (g_1,g_2)\mapsto g_1\ot g_2$.
They can also be defined as the decomposition coefficients for the tensor product of symmetric group representations:
\[
[\mu]\otimes [\nu] = \bigoplus_\la [\la]\ot \C^{g_{\la,\mu,\nu}}.
\]
In terms of symmetric functions, they are defined via the internal product of Schur functions:
\[
s_\la(\mathbf{x}\cdot \mathbf{y}) = \sum_{\mu,\nu}g_{\la,\mu,\nu}s_\mu(\mathbf{x}) s_\nu(\mathbf{y}),
\]
where $\mathbf{x}\cdot \mathbf{y}$ denotes the set of pairwise products of variables $(x_1y_1,x_1y_2,\ldots,x_2y_1,\ldots)$.

\paragraph{Littlewood-Richardson coefficients}

The \emph{Littlewood-Richardson coefficients} $c^\la_{\mu\nu}$ are the decomposition coefficients for the product of Schur functions. They are defined as the multiplicities in the decomposition of the tensor product of two Weyl modules:
\[
\{\mu\}_d\otimes \{\nu\}_d \cong \bigoplus_\la \{\la\}_d\ot\C^{c^\la_{\mu\nu}}.
\]
Equivalently, in terms of Schur functions, they are defined by the product rule:
\[
s_\mu s_\nu = \sum_\la c^\la_{\mu\nu} s_\la.
\]

\paragraph{Kostka numbers}

The \emph{Kostka numbers} $K_{\la\mu}$ are defined as the dimension of the weight space of weight $\mu$ in the $\GL(d)$-module $\{\la\}_d$. Equivalently, they count the number of semistandard Young tableaux of shape $\lambda$ and content $\mu$ (i.e., the number of entries equal to $i$ is $\mu_i$). In the ring of symmetric functions, they serve as the transition coefficients between the Schur, monomial, and homogeneous symmetric function bases:
\[
s_\la = \sum_\mu K_{\la\mu}m_\mu \qquad \text{and} \qquad  h_\mu = \sum_\la K_{\la\mu} s_\la.
\]

\subsection{Branching problem description}

In the following, we describe the computational problem we study in this manuscript. This can be considered as a generalization of the main representation-theoretic multiplicities discussed above.
\paragraph{}
Let $G = \GL(d_1) \times \cdots \times \GL(d_\g)$ and $H = \GL(n_1) \times \cdots \times \GL(n_\h)$ be products of general linear groups, for some positive integers $\g,\h,d_1,\dots,d_\g,n_1,\dots,n_\h$.

Given a group homomorphism $\phi: H \to G$, any representation of $G$ can be viewed as a representation of $H$ by composition, in which case we say that the corresponding $G$-module restricts to $H$. When an irreducible $G$-module $\irr{\vmu}{G}$ is restricted to $H$ we denote it by $\{\vmu\}\restr(\phi)$, or $\{\vmu\}\restr$, omitting the dependency on $\phi$ when the choice is clear. Such a module is generally reducible, decomposing into a direct sum of irreducible $H$-modules:
\begin{equation}\label{eq:restr repeat}
\{\vmu\}\restr \;\cong\;\bigoplus_{\vla}\irr{\vla}{H}\ot \C^{m_{\vmu,\vla}} .
\end{equation}
The sum is over all partition tuples $\vla\in \mathcal{P}_{n_1}\times\cdots\times\mathcal{P}_{n_\h}$ corresponding to irreducible representations of $H$. The non-negative integers $m_{\vmu,\vla}(\phi)$, or simply $m_{\vmu,\vla}$, are called the \emph{branching multiplicities} of $H$ in $G$ (with respect to $\phi$).

Equivalently, the branching multiplicity $m_{\vmu, \vla}$ can be defined as the dimension of the space of $H$-module homomorphisms (\emph{intertwiners}) from the irreducible module~$\irr{\vla}{H}$ to the restricted module~$\{\vmu\}\restr$:
\[
m_{\vmu,\vla}(\phi) := \dim\left(\text{Hom}_H(\irr{\vla}{H}, \{\vmu\}\restr)\right).
\]
We can therefore define the branching problem for products of $\GL$:
\begin{prob}[Branching problem for products of $\GL$]\label{prob:branching_computational}
\
\begin{description}
    \item[~~~Input:]
    \
    \begin{itemize}
        \item The specification of the groups $H = \GL(n_1) \times \dots \times \GL(n_\h)$ and $G=\GL(d_1)\times\cdots\times\GL(d_\g)$
        \footnote{In fact, we will see that the dimensions $d_1,\dots,d_\g$ can be inferred from the specification of the group homomorphism, as such they need not be specified explicitly in this step, and we can assume that they are not given to our quantum algorithm.}
        \item The specification of a group homomorphism $\varphi \colon H \to G$
        \item The specification of irreducible representations $\irr{\vla}{H}$ and $\irr{\vmu}{G}$ for $H$ and $G$ respectively
    \end{itemize}
    \item [~~~Task:]   Compute the branching multiplicity $m_{\vmu,\vla}$, defined as the multiplicity of the irreducible module $\irr{\vla}{H}$ in the restriction $\{\vmu\}\restr$ along the group homomorphism~$\varphi \colon H \to G$~(\cref{eq:restr repeat}).
\end{description}
\end{prob}

To rigorously define the computational problem, we still need to specify the encoding for the input of \cref{prob:branching_computational}. We discuss this in detail in the following.

\subsection{Encoding of the input}\label{subs:encoding}

The branching multiplicities, as defined above, depend on three different kind of inputs, being the definition of the groups $G$ and $H$, the labels of the relative irreducible representations and the specific mapping $\phi:H\rightarrow G$.

The specification of the groups $G = \GL(d_1) \times \cdots \times \GL(d_\g)$ and $H = \GL(n_1) \times \cdots \times \GL(n_\h)$ admits a natural encoding: we simply provide the list of positive integers representing the ranks:
\[
    \g,\h,d_1,\dots,d_\g,n_1,\dots,n_\h,
\]
which are encoded in binary.

The irreducible representations of $H$ and $G$ are labeled by $\h$-tuples and $\g$-tuples of partitions respectively. Our input therefore includes the tuple $\vla = (\la^1, \dots, \la^\h)$ for $H$ and $\vmu = (\mu^1, \dots, \mu^\g)$ for $G$, where:
\[
    \vla \in \mathcal{P}_{n_1}\times\cdots\times\mathcal{P}_{n_\h} \quad \text{and} \quad \vmu \in \mathcal{P}_{d_1}\times\cdots\times\mathcal{P}_{d_\g}.
\]
As is standard for computational problems involving partitions of unbounded length, these labels are provided in \emph{unary} encoding. That is, for a partition $\al$ of size $|\al|=k$, the input size is taken to be $\Theta(k)$. We also need to keep tracks of the zero parts of the partitions, to be consistent with their role as highest weights. This can be done by adjoining each partition with a natural number, in binary, indexing the number of rows. That is, for the partition $\al=(5,4,2)$ labeling an irreducible representation for $\GL(6)$, the encoding we use is $[(00000,0000,00), 6]$.

As for the encoding of the homomorphism $\phi:H\rightarrow G$, the matter is more delicate, mainly because we want \cref{prob:branching_computational} to be a proper generalization of the computation of the standard representation-theoretic multiplicities, and in particular of \cref{prob: plethysm}.

Consider the plethysm case, where $H=\GL(n)$, $G=\GL(\irr{\nu}{H})$ for a irreducible representation of $H$ labeled by $\nu\in\mathcal{P}_n$, and we are interested in representations labeled by $\la\in\mathcal{P}_n$ and $\mu\in\mathcal{P}_{d_\nu}$, where $d_\nu=\dim \irr{\nu}{H}$ is in general exponential in $n+|\nu|$. The input we are given in the corresponding instance of \cref{prob:branching_computational} will be composed of the ranks $n$ and $d_\nu$, the labels $\la$ and $\mu$ and the description of the representation $\phi=\rho_\nu: H\rightarrow G$.

 Whatever encoding of $\phi$ we choose, in order to properly generalize the plethysm problem, we need to be able to recover the label $\nu$ from $\phi$. Moreover we would need the size of the encoding of $\phi$ to be at most polynomial in $\log(n),|\la|,|\mu|$ and $|\nu|$, in order to be able to transfer our complexity results to the plethysm coefficients problem, \cref{prob: plethysm}. In fact since $\phi$ does not depend on $\la$ and $\mu$, we would need the description of $\phi$ to be $\poly(|\nu|,\log(n))$. All standard ways to specify a homomorphism $H\rightarrow G$ (e.g. giving the Jacobian matrix of the map at the identity, $d\phi: \mathfrak{h} \to \mathfrak{g}$) will scale with the dimensions of $H$ and $G$, as such, if we use one of these encodings, the size of the input will, in general, be exponential in the size of our inputs, and will therefore not allow us to generalize, for example, the plethysm problem.

Instead, we will make use of the fact that the branching multiplicities do not depend on the full details of the homomorphism $\phi$ and are in fact fully determined by the decomposition of the defining modules for $G$ as $H$-modules:
\[    \C^{d_i}\restr\cong\bigoplus_{j=1}^{t_i} \{\vnu(i,j)\}_H\ot\C^{m_{ij}}.
\]
This is exactly what we needed to generalize the plethysm case, where the only input required was the single partition $\nu$ defining the full decomposition of the defining module $\irr{\nu}{H}$ for $G$:
\[
\C^{d_\nu}\restr\cong \irr{\nu}{H}.
\]
We make this observation rigorous in the following lemma:
\begin{lem}\label{lem:defining representations}
    Let $H$ be a reductive group, $G= \GL(d_1)\times\cdots\times\GL(d_\g)$ and $\phi=(\phi_1,\dots,\phi_\g),\ \psi=(\psi_1,\dots,\psi_\g)$ be homomorphisms from $H$ to $G$, where $\phi_i,\ \psi_i : H\rightarrow \GL(d_i)$. Suppose for all $i=1,\dots,\g$ that $\phi_i$ and $\psi_i$ are equivalent as representations of $H$, then the branching multiplicities of $H$ in $G$ with respect to $\phi$ are the same as those with respect to $\psi$.
\end{lem}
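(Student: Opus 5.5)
If $\phi_i$ and $\psi_i$ are equivalent representations of $H$ for every $i$, the natural idea is that $\phi$ and $\psi$ differ by conjugation by a single element of $G$. That conjugation then intertwines the two restrictions of any $G$-module, so their branching multiplicities agree.

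First, by hypothesis there is, for each $i=1,\dots,\g$, an invertible matrix $A_i\in\GL(d_i)$ with
\[
\psi_i(h) = A_i\,\phi_i(h)\,A_i^{-1}\qquad\text{for all } h\in H .
\]
Set $A\coloneqq(A_1,\dots,A_\g)\in G$. Then $\psi(h) = A\,\phi(h)\,A^{-1}$ for all $h\in H$, i.e., $\psi = c_A\circ\phi$, where $c_A$ is the inner automorphism of $G$ given by conjugation with $A$.

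Next, fix an irreducible $G$-module with representation $\rho_{G,\vmu}\colon G\to\GL(\irr{\vmu}{G})$. Consider the linear isomorphism $T\coloneqq\rho_{G,\vmu}(A)$ of $\irr{\vmu}{G}$. For every $h\in H$ and $v\in\irr{\vmu}{G}$,
\[
T\,\rho_{G,\vmu}(\phi(h))\,v = \rho_{G,\vmu}(A\phi(h)A^{-1})\,T v = \rho_{G,\vmu}(\psi(h))\,T v .
\]
Hence $T$ is an $H$-equivariant isomorphism $\{\vmu\}\restr(\phi)\to\{\vmu\}\restr(\psi)$.

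Finally, isomorphic $H$-modules have isomorphic spaces $\mathrm{Hom}_H(\irr{\vla}{H},\,\cdot\,)$: composing with $T$ gives the isomorphism. Therefore $m_{\vmu,\vla}(\phi)=m_{\vmu,\vla}(\psi)$ for all $\vla,\vmu$. Reductivity of $H$ only ensures that the decomposition~\eqref{eq:restr repeat} exists, so that these Hom-dimensions coincide with the multiplicities there. Complete reducibility holds because the representations considered are polynomial (rational) representations of a reductive group.

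There is no genuine obstacle. The only point needing care is the first step: assembling the componentwise equivalences into a single element $A\in G$. This works precisely because $G$ is a direct product and each $A_i$ lies in the $i$-th factor. The componentwise hypothesis is therefore exactly what is needed; equivalence of the total representation $\bigoplus_i\phi_i$ would not suffice in general.
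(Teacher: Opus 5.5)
Your proof is correct and is essentially the paper's argument. The paper's intertwiner is the restriction of $f_1^{\ot|\mu^1|}\ot\cdots\ot f_\g^{\ot|\mu^\g|}$ to the copy of $\irr{\vmu}{G}$ inside the tensor-power space given by Schur--Weyl duality, and this is exactly your $T=\rho_{G,\vmu}(A)$ with $A=(f_1,\dots,f_\g)$; writing it intrinsically as $\rho_{G,\vmu}(A)$ makes the equivariance check a one-line computation and removes the need for the tensor-power embedding.
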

\begin{proof}
    Suppose we are given two homomorphisms $\phi,\ \psi$ as in the statement of the lemma, we need to prove that for any irreducible $G$-module $\irr{\vmu}{G}$, the restricted $H$-modules are equivalent: $\{\vmu\}\restr(\phi)\cong_H\{\vmu\}\restr(\psi)$.
    For the module $\{\vec{\mu}\}_{G}=\irr{\mu^1}{d_1}\ot\cdots\ot\irr{\mu^\g}{d_\g}$, let $\rho_{\vmu}$ be the corresponding representation. By Schur-Weyl duality (see \cref{subs: schur transform}) applied to each tensor factor separately, we can embed this as a subrepresentation on the $G$-module $(\C^{d_1})^{\ot |\mu^1|}\ot\cdots\ot(\C^{d_\g})^{\ot |\mu^\g|}$. As the components $\phi_i$ and $\psi_i$ of $\phi$ and $\psi$ are equivalent representations, then for each $i=1,\dots,\g$ there exists an intertwiner $f_i\in\GL(d_i)$ such that $\forall x\in H,\ f_i\phi_i(x)=\psi_i(x)f_i$. Then
    \[
    \frestr{f_1^{\ot |\mu^1|}\ot\cdots\ot f_\g^{\ot |\mu^\g|}}{\irr{\vec{\mu}}{G}
    }\in \GL(\irr{\vec{\mu}}{G})
    \]
    is an invertible intertwiner for the $H$-representations $\rho_{\vec{\mu}}\circ \phi$ and $\rho_{\vec{\mu}}\circ \psi$. Thus the two representations are equivalent, and share the same decomposition into irreducible components.
\end{proof}

This allows us to make precise the encoding of the input specifying our group homomorphism:
\begin{dfn}[Specification of group homomorphism]\label{def:specification}
Given $H=\GL(n_1)\times\cdots\times\GL(n_\h)$ and $G=\GL(d_1)\times\cdots\times\GL(d_\g)$, we will specify a group homomorphism by the following data:
\begin{itemize}%[noitemsep]
\item natural numbers~$t_i \in \N$ for $i\in\{1,\dots,\g\}$.
\item natural number~$m_{ij} \in \N$ for $i\in\{1,\dots,\g\}$ and $j\in\{1,\dots,t_i\}$, and
\item partition tuples $\vnu(i,j) \in \mathcal P_{n_1} \times \dots \times \mathcal P_{n_\h}$ for $i\in\{1,\dots,\g\}$ and $j\in\{1,\dots,t_i\}$
\end{itemize}
such that $d_i = \sum_{j=1}^{t_i} m_{ij} \dim \{\vnu(i,j)\}_H$.
This data determines a group homomorphism $\varphi \colon H \to G$, as follows:
For $i=1,\dots,\g$, define the $H$-modules $V_i := \bigoplus_{j=1}^{t_i} \{\vnu(i,j)\}_H \ot \C^{m_{ij}}$. Identify~$V_i \cong \C^{d_i}$, to obtain group homomorphisms $\varphi_i \colon H \to \GL(d_i)$, and define $\varphi := (\varphi_1,\dots,\varphi_\g)$.
\end{dfn}

Of course by \cref{lem:defining representations} the choice of group the homomorphism $\phi$ we use in \cref{def:specification} could be replaced with any other group homomorphism such that the defining modules of $G$ restrict to $H$ in accordance with the given data, but the one we choose has the advantage of identifying the computational basis of the defining modules of $G$ with the \emph{Fourier basis} for the action of $H$ on those same modules. This means that we can freely operate \emph{full} Fourier sampling (i.e. measure partition type, Gelfand-Tsetlin basis and the multiplicity register) for $H$ on any state expressed in the computational basis for $\C^{d_i}$. We will make use of this in the algorithm in \cref{sec: verifier}.

\paragraph{}
Another advantage of defining the branching problem in terms of the decomposition of the defining modules of $G$ is that we can immediately translate the problem in the language of symmetric functions, to define a \emph{dimension-independent} version:

\begin{prob}[Dimension-independent branching problem]\label{prob:schur_branching}
\
\begin{description}
    \item[~~~Input:]
    \
    \begin{itemize}
        \item Natural numbers $\g$ and $\h$.
        \item For each $i \in \{1,\dots,\g\}$, a symmetric function $P_i$, expressed as a linear combination of products of $\h$ Schur functions in independent sets of variables. The data therefore consists of a set of $\h$-tuples of partitions
        \[
        \vec{\nu}(i,j)=(\nu^1(i,j),\dots,\nu^\h(i,j)),\ \text{for }i\in\{1,\dots,\g\},\ j\in\{1,\dots,t_i\},
        \]
        each in $\mathcal{P}^\h$, and corresponding multiplicities $\{m_{i,j}\}_{i,j} \subset \N$. This data defines the symmetric functions:
         \[
        P_i = \sum_{j=1}^{t_i} m_{i,j} s_{\vec{\nu}(i,j)}(\mathbf{x}^1,\dots, \mathbf{x}^\h),
        \]
        where  $\mathbf{x}^1=(x_1^1,x_2^1,\dots),\dots,\mathbf{x}^\h=(x_1^\h,x_2^\h,\dots)$ are independent sets of variables, $s_{\nu^k(i,j)}$ is the Schur function associated to the partition $\nu^k(i,j)$ and
        \[
        s_{\vec{\nu}(i,j)}(\mathbf{x}^1,\dots, \mathbf{x}^\h) := s_{\nu^1(i,j)}(\mathbf{x}^1)\cdots s_{\nu^\h(i,j)}(\mathbf{x}^\h).
        \]
        \item Tuples of partitions $\vmu = (\mu^1, \dots, \mu^\g)\in \mathcal{P}^\g$ and $\vla = (\la^1, \dots, \la^\h)\in \mathcal{P}^\h$ .
    \end{itemize}
    Partitions are given in unary encoding.

    \item [~~~Task:] Compute the coefficient of $s_{\vla}(\mathbf{x}^1,\dots, \mathbf{x}^\h) = s_{\la^1}(\mathbf{x}^1)\cdots s_{\la^\h}(\mathbf{x}^\h)$ in the Schur expansion of the product of plethysms
    \[
        s_{\mu^1}[P_1](\mathbf{x}^1,\dots, \mathbf{x}^\h) \cdots s_{\mu^\g}[P_\g](\mathbf{x}^1,\dots, \mathbf{x}^\h).
    \]
    This coefficient, denoted using the Hall inner product, is the integer
    \[
    \langle s_{\mu^1}[P_1](\mathbf{x}^1,\dots, \mathbf{x}^\h) \cdots s_{\mu^\g}[P_\g](\mathbf{x}^1,\dots, \mathbf{x}^\h), s_{\vec{\la}}(\mathbf{x}^1,\dots, \mathbf{x}^\h) \rangle.
    \]
\end{description}
\end{prob}

\cref{prob:branching_computational} and \cref{prob:schur_branching} are essentially equivalent: in \cref{prob:branching_computational}, the decomposition of a representation in irreducible components is equivalent to the decomposition of its character in terms of irreducible characters, which in this case are products of Schur polynomial in independent sets of variables.
Passing to the ring of symmetric functions $\Lambda$ with an infinite number of variables returns \cref{prob:schur_branching}. The Hall inner product is defined in $\Lambda$, however the resulting identities (expansions of the product of Schur function plethysms into regular Schur functions) hold for any choice of variables. In practice, we can then set some of the variables to 0, reduce to polynomial identities and extract coefficients. Formally, this says that for any sets of input partitions $\mathcal{I}$, there exists a finite number of variables $N_{\mathcal{I}}$ such that the decomposition of $s_{\mu^1}[P_1](\mathbf{x}^1,\dots, \mathbf{x}^\h) \cdots s_{\mu^\g}[P_\g](\mathbf{x}^1,\dots, \mathbf{x}^\h)$ is equivalent to the decomposition of the corresponding constructions over Schur polynomials, as long as the Schur polynomials have $\ge N_{\mathcal{I}}$ nonzero variables. In~\Cref{sec: classical} we show how this is done in practice.

The quantum algorithm we develop in \cref{sec: verifier} will tackle \cref{prob:branching_computational}, while in \cref{sec: classical} we use the formulation of \cref{prob:schur_branching}.

\section{Representation-theoretic multiplicities as branching multiplicities}\label{sec: representation-theoretic}

In this section we present how the general branching multiplicities problem described in \cref{prob:branching_computational} is a natural generalization of the main problems related to representation-theoretic multiplicities. We use the notation developed in \cref{sec: problem description}, and explicitly state when the numbers defining the dimensions of the groups are large enough for \cref{prob:branching_computational} to be equivalent to the dimension-independent \cref{prob:schur_branching}. We represent partitions as Young diagrams.

\paragraph{Plethysm coefficients}
The set of inputs for \cref{prob:branching_computational} to describe the problem of calculating the plethysm coefficient $a_{\mu,\nu}^\la$ is:
\begin{itemize}
    \item The specification of the groups:
    \[ \h=1,\ n_1\in \N \quad \text{and} \quad \g=1,\ d_1=\dim\irr{\nu}{n_1}. \]

    \item The specification of the group homomorphism $\varphi$, as in \cref{def:specification}:
    \[
    t_1=1,\quad m_{1,1}=1, \quad \text{and} \quad \vnu(1,1)=\nu,
    \]
    where $\nu$ is the single partition representing the $H$-module structure of the defining module of $G$.

    \item The specification of the representations:
    \[ \vmu=\mu \quad \text{and} \quad \vla=\la, \]
    where the partition $\mu$ indexes a $G$-module, and the partition $\la$ indexes an $H$-module.
\end{itemize}
Here for \cref{prob:branching_computational} to be dimension-independent we must have $n_1\ge\max(\ell(\la),\ell(\nu))$ and such that $\dim\irr{\nu}{n_1}\ge\ell(\mu)$. As the plethysm coefficient $a_{\mu,\nu}^\la$ is zero whenever $|\mu|\cdot|\nu|\neq|\la|$, we restrict to the case $|\mu|\cdot|\nu|=|\la|$, in which case we can take $n_1=|\la|$ as an upper bound.

\paragraph{Kronecker coefficients}
The set of inputs for \cref{prob:branching_computational} to describe the problem of calculating the Kronecker coefficient $g_{\la^1,\la^2,\mu}$ is:
\begin{itemize}
    \item The specification of the groups:
    \[ \h=2,\ n_1, n_2 \in \N \quad \text{and} \quad \g=1,\ d_1=n_1\cdot n_2. \]

    \item The specification of the group homomorphism $\varphi$, as in \cref{def:specification}:
    \[
    t_1=1,\quad m_{1,1}=1,\quad \text{and} \quad \vnu(1,1)=(\ydiagram{1},\,\ydiagram{1}),
    \]
    where the pair of partitions represents the $H$-module structure on the defining module of $G$.

    \item The specification of the representations:
    \[ \vmu=\mu \quad \text{and} \quad \vla=(\la^1, \la^2), \]
    where the partition $\mu$ indexes a $G$-module, and the pair of partitions $\vla$ indexes an $H$-module.
\end{itemize}
For the problem to be dimension-independent we must have $n_1 \ge \ell(\la^1),\ n_2 \ge \ell(\la^2)$ and $n_1\cdot n_2\ge \ell(\mu)$. A sufficient condition is $n_1,\ n_2\ge |\la^1|=|\la^2|=|\mu|$.

\paragraph{Littlewood-Richardson coefficients}
The set of inputs for \cref{prob:branching_computational} to describe the problem of calculating the Littlewood-Richardson coefficient $c^\la_{\mu, \nu}$ is:
\begin{itemize}
    \item The specification of the groups:
    \[ \h=1,\ n_1 \in \N \quad \text{and} \quad \g=2,\ d_1=n_1,\ d_2=n_1. \]

    \item The specification of the group homomorphism $\varphi$, as in \cref{def:specification}:
    \begin{align*}
        &t_1=1,\quad m_{1,1}=1,\quad \vnu(1,1)=\ydiagram{1}, \\
        &t_2=1,\quad m_{2,1}=1,\quad \vnu(2,1)=\ydiagram{1}.
    \end{align*}
    These correspond to the partitions representing the $H$-module structure on the defining modules of $G$.

    \item The specification of the representations:
    \[ \vmu=(\mu, \nu) \quad \text{and} \quad \vla=\la, \]
    where the pair of partitions $\vmu$ indexes a $G$-module, and the partition $\la$ indexes an $H$-module.
\end{itemize}
As the Littlewood-Richardson coefficient $c^\la_{\mu\nu}$ is nonzero only if $|\mu|+|\nu|=|\la|$, we restrict to this case. For the problem to be dimension-independent we must have $n_1 \ge \max(\ell(\la),\ell(\mu),\ell(\nu))$. A sufficient condition is $n_1\ge|\la|$.

\paragraph{Kostka numbers}
The set of inputs for \cref{prob:branching_computational} to describe the problem of calculating the Kostka number $K_{\la\mu}$ is:
\begin{itemize}
    \item The specification of the groups:
    \[ \h=d,\ n_1=\dots=n_d=1 \quad \text{and} \quad \g=1,\ d_1=d. \]

    \item The specification of the group homomorphism $\varphi$, as in \cref{def:specification}:
    \[
    t_1=d, \quad \text{and for } j \in \{1,\dots,d\}: \quad m_{1,j}=1, \quad \vnu(1,j)= e_j,
    \]
    where $e_j$ corresponds to the tuple of partitions $(0,\dots,\ydiagram{1},\dots,0)$ with the single box at position $j$ (and $0$ is the empty partition). These represent the $H$-module structure of the defining module for $G$.

    \item The specification of the representations:
    \[ \vmu=\la \quad \text{and} \quad \vla=(\mu_1, \dots, \mu_d), \]
    where the partition $\la$ indexes a $G$-module, and the weight $\mu$ indexes an $H$-module.
\end{itemize}
As the Kostka number $K_{\la\mu}$ is zero whenever $|\la|\neq|\mu|$, we restrict to this case. For the problem to be dimension-independent, we must have $d \ge \max(\ell(\la),\ell(\mu))$. A sufficient condition is $d\ge|\la|$.

\paragraph{}

We therefore have an immediate reduction from the standard computation of representation-theoretic multiplicities to \cref{prob:branching_computational}:
\begin{lem}
    \cref{prob:branching_computational}, when specialized on these four families of inputs, is precisely the problem of evaluating the plethysm coefficients, Kronecker coefficients, Littlewood-Richardson coefficients and Kostka numbers for the same inputs.
\end{lem}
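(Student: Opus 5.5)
The plan is to treat the four families one at a time. In each case we unpack \cref{def:specification} to identify the homomorphism $\varphi\colon H\to G$ it determines, up to equivalence. By \cref{lem:defining representations}, the branching multiplicities depend only on the equivalence class of each component $\varphi_i$. It therefore suffices to exhibit, for each family, one standard homomorphism whose components have the prescribed $H$-module structure on $\C^{d_i}$, and then check that the restriction of $\irr{\vmu}{G}$ along it is precisely the module whose decomposition defines the classical coefficient. Throughout we assume the dimension conditions stated for each family, so that all relevant Weyl modules are nonzero and the multiplicities are stable.

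\emph{Plethysm.} The data $t_1=m_{1,1}=1$, $\vnu(1,1)=\nu$ say that $\C^{d_1}\restr\cong\irr{\nu}{H}$. Hence $\varphi$ is equivalent to $\rho_{H,\nu}\colon H\to\GL(\irr{\nu}{H})$. The restriction of $\irr{\mu}{G}$ is then $\rho_{G,\mu}\circ\rho_{H,\nu}$, and by \cref{eq:def_pleth} the multiplicity of $\irr{\la}{H}$ in it is $a^\la_{\mu,\nu}$.

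\emph{Kronecker.} Here $\C^{n_1n_2}\restr\cong\irr{\ydiagram{1}}{n_1}\ot\irr{\ydiagram{1}}{n_2}=\C^{n_1}\ot\C^{n_2}$. So $\varphi$ is equivalent to $(g_1,g_2)\mapsto g_1\ot g_2$, and the multiplicity of $\irr{\la^1}{n_1}\ot\irr{\la^2}{n_2}$ in $\irr{\mu}{n_1n_2}$ is $g_{\la^2,\mu,\la^1}$ in the paper's convention. This equals the Kronecker coefficient by the symmetry of $g$ in its three arguments.

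\emph{Littlewood--Richardson.} Both defining modules restrict to $\C^{n_1}$, so $\varphi$ is equivalent to the diagonal map $g\mapsto(g,g)$. Restricting $\irr{\mu}{n_1}\ot\irr{\nu}{n_1}$ along the diagonal gives the tensor product of $\GL(n_1)$-modules, whose multiplicities are $c^\la_{\mu\nu}$.

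\emph{Kostka.} The data give $\C^d\restr\cong\bigoplus_j e_j$, where $\GL(1)^d$ acts on the $j$-th line by the $j$-th coordinate character. So $\varphi$ is equivalent to the inclusion of the diagonal torus $T\subset\GL(d)$. Irreducible $T$-modules labeled by $(\mu_1,\dots,\mu_d)$ are one-dimensional weight characters, so the multiplicity of $\irr{\vla}{H}$ in $\irr{\la}{d}\restr$ is the dimension of the $\mu$-weight space, which is $K_{\la\mu}$.

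The only step needing care is the Kostka case. Each $\mu_j$ must be read as a one-row partition $(\mu_j)\in\mathcal P_1$, and we must verify that these label exactly the torus weights, including zero parts. Some care is also needed with index conventions, both the Kronecker symmetry and $|\mu|\cdot|\nu|=|\la|$ in the plethysm case. Beyond this the argument is a direct matching of definitions.
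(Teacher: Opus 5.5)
Your proposal is correct and takes the same route as the paper, whose proof is simply the instruction to compare each specialization with the first definition in \cref{subs:rep theor multipl}; your case-by-case unpacking via \cref{def:specification} and \cref{lem:defining representations} just makes that comparison explicit. Two of your remarks are accurate details the paper leaves implicit. In the paper's convention the Kronecker specialization literally gives $g_{\la^2,\mu,\la^1}$, so it needs the full symmetry of Kronecker coefficients. In the Kostka case each $\mu_j$ must be read as a one-row partition in $\mathcal P_1$.
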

\begin{proof}
For each coefficient, compare with the first definition given in \cref{subs:rep theor multipl}.
\end{proof}

\section{Quantum complexity classes and the Schur transform}\label{sec: quantum schur}
In the following section we give a definition of the complexity classes $\QMA$ and $\#\BQP$, and discuss the quantum circuit we will use to implement the Schur transform on our quantum algorithm.

\subsection{Quantum complexity classes}
\begin{dfn}[$\QMA$ -- Quantum Merlin-Arthur]\label{def: QMA}
    A language (with promise) $L \subseteq \{0,1\}^*$ is in the complexity class $\QMA$ if there exists an uniform family of polynomial-time quantum circuits
    $\{V_n\}_{n\in\N}$ and a polynomial $p(n)$ such that for any input string $x$ of length $n$ respecting the promise:
    \begin{align*}
        &\forall x \in L, \quad \exists \ket{\psi}\in (\C^2)^{\ot p(n)} : \Pr[V_n(\ket{x}, \ket{\psi}) \text{ accepts}] \geq \frac{2}{3},
        \\
        &\forall x \notin L, \quad \forall \ket{\psi}\in(\C^2)^{\ot p(n)} : \Pr[V_n(\ket{x}, \ket{\psi}) \text{ accepts}] \leq \frac{1}{3}.
    \end{align*}

\end{dfn}

\begin{dfn}[$\#\BQP$ -- Counting-$\BQP$]\label{def: count BQP}
 A (partial) function $f: \{0,1\}^* \to \mathbbm{N}$ is in $\#\BQP$ if there exists a uniform family of polynomial-size quantum circuits $\{V_n\}_{n\in \N}$ and a polynomial $p(n)$ such that for any input string $x$ of length $n$, the witness space for $V_n$ being $\mathcal{H}=(\C^2)^{\ot p(n)}$ decomposes as
 \[
\mathcal{H}=\mathcal{H}_{\text{acc}}(x)\op \mathcal{H}_{\text{rej}}(x),
 \]
 with $\dim \mathcal{H}_{\text{acc}}(x)= f(x)$ and
 \begin{align*}
     &\forall \ket{\psi}\in \mathcal{H}_{\text{acc}}(x)\quad \Pr[V_n(\ket{x},\ket{\psi})\ \text{accepts}]\ge \frac{2}{3},\\
     &\forall \ket{\psi}\in\mathcal{H}_{\text{rej}}(x)\quad \Pr[V_n(\ket{x},\ket{\psi})\ \text{accepts}]\leq \frac{1}{3}.
 \end{align*}
\end{dfn}
For both \cref{def: QMA} and \cref{def: count BQP}, the constants $2/3$ and $1/3$ can be amplified to be exponentially close to 1 and 0, respectively.

\subsection{Schur transform}\label{subs: schur transform}
Consider the tensor product space $(\mathbbm{C}^d)^{\otimes n}$. On this space, we can define commuting actions of the general linear group $\GL(d)$ and the symmetric group $S_n$, as follows:

The action of $g \in \GL(d)$ is defined by its simultaneous application to each tensor factor:
\[
g \cdot (v_1 \otimes v_2 \otimes \cdots \otimes v_n) = (g v_1) \otimes (g v_2) \otimes \cdots \otimes (g v_n),
\]
for $v_i \in \mathbbm{C}^d$.

The action of $\sigma \in S_n$ is defined by permuting the tensor factors:
\[
\sigma \cdot (v_1 \otimes v_2 \otimes \cdots \otimes v_n) = v_{\sigma^{-1}(1)} \otimes v_{\sigma^{-1}(2)} \otimes \cdots \otimes v_{\sigma^{-1}(n)}.
\]
By \emph{Schur-Weyl duality}, these two actions are maximally commuting, and the space decomposes into a multiplicity-free direct sum of irreducible representations of $\GL(d)\times S_n$:
\[
(\mathbbm{C}^d)^{\otimes n} \cong \bigoplus_{\lambda \in \mathcal{P}_d(n)} \{\lambda\}_d \otimes [\lambda].
\]
Importantly this implies that each irreducible polynomial representation of $\GL(d)$ can be embedded in the tensor product space $(\C^d)^{\ot n}$ as a subrepresentation.
This decomposition motivates a unitary change of basis from the computational basis of $(\mathbbm{C}^d)^{\otimes n}$ to the \emph{Schur basis}, denoted by $\{\ket{\lambda, p, q}\}$. A basis vector $\ket{\lambda, p, q}$ is indexed by a partition $\lambda \in \mathcal{P}_d(n)$, a basis state $p$ from an orthonormal basis for the module $\{\lambda\}_d$, and a basis state $q$ from an orthonormal basis for the module $[\lambda]$. We denote such a transformation $\mathcal{S}$ and we call it the \emph{Schur transform}.

As we could in principle choose different orthonormal bases for $\irr{\la}{d}$ and $[\la]$, there are many possible Schur transforms. In this manuscript we utilize the quantum circuit implementation for the Schur transform from Ref.~\cite{burchardt_high-dimensional_2025}, for which the bases of the Weyl modules and of the Specht modules are the \emph{Gelfand-Tsetlin} basis and the \emph{Young-Yamanouchi} basis, respectively (together commonly referred to as the Gelfand-Tsetlin bases).
Particularly for the Weyl modules, it is useful to discuss the encoding of Gelfand-Tsetlin basis vectors as quantum states. Elements of the Gelfand-Tsetlin basis are indexed by \emph{GT patterns} (or equivalently by semistandard Young tableaux). As the entries of these patterns contain a number in $\{1,\dots,n\}$, they can be represented by $n$-dimensional qudits. Those can then be represented as a subspace of the $\lceil\log_2 n\rceil$-qubit space.
We refer to Ref.~\cite{burchardt_high-dimensional_2025} for the details of the encoding.

The first quantum circuit to implement a Schur transform on a quantum system required gate complexity $\mathcal{O}(nd^4\polylog(n,d))$
\cite{bacon_efficient_2006}
(see also Ref.~\cite{burchardt_high-dimensional_2025}). This would be a problem for our protocol as we need to apply Schur transforms to spaces with exponentially high local dimensions $d$. Recent work shows that it can be done in time $\mathcal{O}(n^4\polylog(n,d))$ \cite{burchardt_high-dimensional_2025}
building on ideas in Refs.~\cite{harrow2005} and \cite{krovi_efficient_2019}.

\begin{thm}[\cite{burchardt_high-dimensional_2025}]\label{thm: high dimensional}
     There exists a quantum circuit that, with error probability $\varepsilon$, implements a Schur transform $\mathcal{S}$ with respect to the Gelfand-Tsetlin bases of the $\GL(d)$ and $S_n$ modules with gate complexity $\mathcal{O}(n^4\polylog(n,d,\varepsilon^{-1}))$.
\end{thm}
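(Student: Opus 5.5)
This is the high-dimensional Schur transform of Ref.~\cite{burchardt_high-dimensional_2025}, which we use as a black box. If I had to prove it, I would reduce the local dimension $d$ to at most $n$ and then invoke a Schur transform whose cost is polynomial in the local dimension, such as the Clebsch-Gordan cascade of Ref.~\cite{bacon_efficient_2006}. Applied in local dimension $n$, the cascade costs $\mathcal{O}(n\cdot n^4\polylog(n,\varepsilon^{-1}))$, which is a factor $n$ above the target. So I would need a variant that is $\mathcal{O}(n^4)$ in dimension $n$, for example by exploiting the sparsity of the patterns below. The key observation is that a computational basis string $\ket{i_1\cdots i_n}\in(\C^d)^{\ot n}$ uses at most $n$ distinct letters. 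Moreover, Gelfand-Tsetlin basis vectors are weight vectors for the diagonal torus.

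\textbf{Weight-space decomposition.} The weight space of weight $w\in\N^d$ is spanned exactly by the strings of content $w$. Hence
\[
(\C^d)^{\ot n}=\bigoplus_{S\subseteq[d],\,\abs S\le n} W_S,
\]
where $W_S$ is spanned by the strings whose set of occurring letters is exactly $S$. The key lemma I would then prove is the following. Let $\iota_S\colon S\to[\abs S]$ be the order-preserving relabelling. Then the induced isometry $W_S\to(\C^{n})^{\ot n}$ maps each GT basis vector of $\{\la\}_d$ with support $S$, tensored with a Young-Yamanouchi vector, to the corresponding GT/YY vector of $\{\la\}_{n}$. Here the correspondence is induced on semistandard tableaux by $\iota_S$. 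I expect this lemma to be the main obstacle. My first attempt would use the explicit construction of GT vectors via the subgroup chain $\GL(1)\subset\cdots\subset\GL(d)$. For letters $j\notin S$ lying between two consecutive elements of $S$, the corresponding rows of the GT pattern coincide, because the branching $\GL(j)\downarrow\GL(j-1)$ is then trivial. The matrix coefficients of the GT basis, given by Gelfand-Tsetlin formulas in terms of pattern entries, depend only on the ``compressed'' pattern. If this route gets messy, I would instead argue through the $S_n$ side: the Young-Yamanouchi basis is independent of $d$, and generalized phase estimation with Beals' $S_n$ Fourier transform identifies $\la$ and $q$ while commuting with the relabelling.

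\textbf{Circuit.} Given the lemma, the circuit proceeds as follows. (i) Reversibly compute the sorted list $S$ of distinct letters, in $\poly(n)\log d$ gates, and relabel the string by $\iota_S$ into $n$-dimensional qudits. Since the uncomputation of $S$ must be deferred, $S$ is kept in an ancilla. (ii) Apply the low-dimensional Schur transform on $(\C^n)^{\ot n}$. (iii) Map the resulting GT pattern, equivalently a SSYT with entries in $[\abs S]$, back to a SSYT with entries in $S\subseteq[d]$, using the stored $S$. This step uncomputes the ancilla $S$, since $S$ equals the set of entries of the tableau, which is exactly the support of the weight.

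Unitarity follows because the maps in steps (i) and (iii) are bijections between orthonormal bases on each sector $W_S$. Registers with $\polylog d$ bits enter only through comparisons and sorting, giving gate count $\mathcal{O}(n^4\polylog(n,d,\varepsilon^{-1}))$, with the error $\varepsilon$ inherited from approximating the rotations in step (ii).
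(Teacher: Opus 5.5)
Citing the result as a black box is exactly what the paper does: it imports the statement from Ref.~\cite{burchardt_high-dimensional_2025}, which builds on Refs.~\cite{harrow2005,krovi_efficient_2019}. The proof you sketch beyond that has a genuine gap, at the point you flag yourself. Step~(ii) needs a Schur transform on $(\C^n)^{\ot n}$ of cost $\mathcal{O}(n^4\polylog(n,\varepsilon^{-1}))$. The only one you name, the Clebsch--Gordan cascade of Ref.~\cite{bacon_efficient_2006}, costs $\mathcal{O}(nd^4\polylog)$, which is $\mathcal{O}(n^5\polylog)$ at $d=n$. ``Exploiting the sparsity of the patterns'' is not an argument. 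Alphabet compression can remove the $d$-dependence, but it cannot improve the $n$-dependence of whatever transform you run afterwards. So your circuit as written costs $\mathcal{O}(n^5\polylog(n,d,\varepsilon^{-1}))$, and the final gate count you state is unsupported.

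The $n^4$ in the cited bound comes from a different route. It is a (corrected) Krovi-type Schur transform on the $S_n$ side, using generalized phase estimation with Beals' Fourier transform over $S_n$~\cite{beals_quantum_1997}. That transform already depends only polylogarithmically on $d$, so it needs no compression. For this paper's purposes your weaker bound would in fact suffice, since the proof of \cref{thm: plethysm final} only needs cost $\poly(n,\log d,\log\varepsilon^{-1})$.

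Your compression lemma is true up to phases, but your suggested route to it is off. The Gelfand--Tsetlin formulas give the action of the generators $E_{j,j\pm1}$ on GT vectors, not the expansion of those vectors in the computational basis of $(\C^d)^{\ot n}$. Moreover, for $j\notin S$ these generators leave $W_S$. So ``the coefficients depend only on the compressed pattern'' is not the relevant statement.

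Here is a clean argument. Let $E_S=e_S^{\ot n}$, with $e_S\colon\C^{\abs{S}}\to\C^d$ the coordinate embedding.
\begin{itemize}
\item $E_S$ commutes with $S_n$, hence respects the Young--Yamanouchi chain.
\item For $s_i\le j<s_{i+1}$, $E_S$ maps strings whose letters $\le i$ occupy positions $A\subseteq[n]$ to strings whose letters $\le j$ occupy the same positions.
\item $E_S$ commutes with the $S_A$-isotypic projectors, which by Schur--Weyl duality cut out the $\GL(i)$- and $\GL(j)$-isotypic components respectively.
\end{itemize}
Hence $E_S$ maps each compressed Schur basis vector into the one-dimensional joint isotypic subspace spanned by the corresponding vector for $\GL(d)\times S_n$. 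The two agree up to a phase $c_T$, which is independent of $q$ by Schur's lemma on $[\la]$. This only changes the phase convention of the GT basis.
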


\section{Algorithms description and analysis}

In this section we prove the main results of the manuscript.
We start with describing our algorithm in the simpler case of plethysm coefficients, which represents both the main motivating example and, in some sense, the simplest application of our method. We then present the general algorithm for \cref{prob:branching_computational} and prove its correctness. The correctness of the algorithm for the plethysm problem can be inferred as a corollary to the more general proof.

\subsection{Algorithm for \texorpdfstring{\cref{prob: plethysm}}{prob1}}\label{subsec: plethysm algorithm}
Let $n\in \N$, and $\la,\ \mu,\ \nu$ partitions. Let $H=\GL(n),\ G=\GL(d_\nu)$, where $d_\nu:=\dim(\{\nu\}_n)$, we describe a quantum algorithm taking quantum witnesses from the space $\{\mu\}_G$, accepting with high probability elements from a $a_{\mu, \nu}^\la$-dimensional subspace, and rejecting with high probability any element in its orthogonal complement.

\paragraph{Algorithm description}
    The algorithm proceeds in three steps:
    \begin{enumerate}
        \item We embed the witness, representing an element in $\irr{\mu}{d_\nu}$, into $\irr{\nu}{n}^{\ot |\mu|}$ by Schur--Weyl duality, and apply an inverse Schur transform $\Sch^{-1}$.
        \item We embed each register, representing elements in $\irr{\nu}{n}$, into $(\C^n)^{\ot |\nu|}$ by Schur--Weyl duality, to get the global space $((\C^n)^{\otimes|\nu|})^{\ot|\mu|}$ and apply $|\nu|$ local inverse Schur transforms $\Sch^{-1}\ot\dots\ot\Sch^{-1}$ on each register $(\C^n)^{\otimes|\nu|}$.
        \item We apply a Schur transform to the global space $(\C^n)^{\otimes|\nu||\mu|}$, and measure both the partition register and $\GL(n)$-module register, and accept if the outcome is $(\la,p_0)$, for some fixed $\ket{p_0}\in \irr{\la}{n}$.
    \end{enumerate}
Let us look at each step in detail:
\paragraph{1.}
 We add two supplementary registers: the \emph{partition register} and the \emph{symmetric module register}. The partition register will be initialized in state $\ket{\mu}$, while the symmetric-module register will be initialized in a fixed state $\ket{\psi_{\mu}} \in [\mu]$.
We then apply an inverse Schur transform on the 3-registers state, to get a $|\mu|$-register state expressed in the computational basis $\{\ket{i_1\dots i_{|\mu|}}\},\ i_j\in\{1,\dots,d_\nu\}$. By \cref{lem:defining representations}, we can assume that $\GL(n)$ acts on each $i_j$ via some fixed, easy to calculate action, that is we identify the computational basis of $\C^{d_\nu}$ with the Gelfand-Tsetlin basis of $\irr{\nu}{n}$.

\paragraph{2.}
We add a partition register in state $\ket{\nu}$ and a symmetric-module register in state $\ket{\psi_{\nu}}$, for some fixed $\ket{\psi_{\nu}}\in[\nu]$, to each of the $|\la|$ registers. The global state is now expressed in the basis $\{\ket{\nu\  i_1\ \psi_{\nu}}\ot\dots\ot\ket{\nu\ i_{|\la|}\ \psi_{\nu}}\}$. We then apply local inverse Schur transforms on each group of 3 registers, to get an element expressed in the computational basis $\{\ket{i_{1,1}\dots i_{1,|\nu|}}\ot\dots\ot \ket{i_{|\mu|,1}\dots i_{|\mu|,|\nu|}}\}$, where $i_{j,k}\in \{1,\dots,n\}$.

\paragraph{3.}
We relabel the indices of the computational basis as $\{\ket{i_1\dots i_{|\nu||\mu|}}\}$, and a apply a Schur transform on the space $(\C^n)^{\ot |\nu||\mu|}$. The state will now be expressed in the Schur basis $\{\ket{\ga\ p \ q}\}$. We subsequently measure the partition register and $H$-module register. Finally we accept if and only if the outcome of the measurement is $(\la,p_0)$, for some fixed $\ket{p_0}\in \irr{\la}{n}$.
\paragraph{}
A circuit representation of the full algorithm is in \cref{fig:plethysm}

\subsection{Algorithm for \texorpdfstring{\cref{prob:branching_computational}}{prob2}}\label{sec: verifier}
This algorithm can be seen as a generalization of the previous one. The main ideas are the same, but we need to be more careful as the defining modules for $G$ are not guaranteed to be irreducible as $H$-representation, as such we employ an intermediate measurement, which we use to implement the subsequent embedding.
\paragraph{}
For the inputs of \cref{prob:branching_computational} we describe a quantum verifier taking witnesses in the space $\{\vec{\mu}\}_G$, accepting with high probability for a $m_{\vec{\mu},\vec{\la}}$-dimensional subspace and rejecting with high probability from its orthogonal complement.
    \paragraph{Algorithm description}
   The algorithm proceeds in the following steps:
\begin{enumerate}
    \item \label{step:embed}
    We embed the witness state $\ket{\psi}\in \irr{\vmu}{G}$ in the tensor product space $\bigotimes_{i=1}^\g (\C^{d_i})^{\ot|\mu^i|}$. We do this by considering, on each of the $\g$ registers the inclusion given by Schur-Weyl duality:
    \begin{equation}
        \irr{\mu^i}{d_i} \hookrightarrow (\C^{d_i})^{\ot|\mu^i|}.
    \end{equation}
    Implementing this requires adding a partition register (initialized to the partition state $\ket{\mu^i}$) and a permutation register representing any state $\ket{q_0} \in [\mu^i]$.

    \item \label{step:schur_inverse}
    We operate a total of $\g$ inverse Schur transforms, each on the space $(\C^{d_i})^{\ot |\mu^i|}$ for $i=1,\dots,\g$. This expresses the local state in the computational basis of the corresponding $\C^{d_i}$ space.

    \item \label{step:measure}
    By \cref{lem:defining representations}, and as explained in \cref{def:specification}, we can assume the action of $H$ on the computational basis of $\C^{d_i}$ is known. This allows us to use the decomposition:
    \begin{equation}
        \C^{d_i}\restr \cong \bigoplus_{j=1}^{t_i} \irr{\vec{\nu}(i,j)}{H} \ot \C^{m_{i,j}}.
    \end{equation}
    We perform a measurement on each of the $\sum_{i=1}^\g|\mu^i|$ sub-registers to determine the corresponding $H$-isotypic component. After measurement, the state now lives in the space
    \begin{equation}
        \bigotimes_{i=1}^{\g} \bigotimes_{s=1}^{|\mu^i|} \irr{\vec{\nu}(i,j_{i,s})}{H} \ot \C^{m_{i,j_{i,s}}},
    \end{equation}
    where each index $j_{i,s} \in\{1,2,\dots,t_i\}$ is an outcome of a measurement.

    \item \label{step:embed_H}
    Accordingly to the measurement outcomes, we use Schur-Weyl duality to embed each of the $H$-irreducible representation registers into its corresponding tensor product space. This involves adding partition and permutation registers based on the outcomes $\nu^k(i,j_{i,s})$, resulting in the state being in the space:
    \begin{equation}
        \bigotimes_{i=1}^\g\bigotimes_{s=1}^{|\mu^i|}\left(\bigotimes_{k=1}^\h (\C^{n_k})^{\ot|\nu^k(i,j_{i,s})|}\right)\ot \C^{m_{i,j_{i,s}}}.
    \end{equation}
    \item \label{step:second Schur transform} We operate a total of $\h\sum_{i=1}^\g |\mu^i|$ inverse Schur transforms on the spaces $(\C^{n_k})^{\ot|\nu^k(i,j_{i,s})|}$. This expresses the state in the computational basis of the corresponding $\C^{n_k}$ space.
    \item \label{step:rearrange}
    We rearrange the tensor product to group terms by the index $k$. The space can be written as:
    \begin{align}
        & \bigotimes_{k=1}^\h\left(\bigotimes_{i=1}^\g\bigotimes_{s=1}^{|\mu^i|} (\C^{n_k})^{\ot|\nu^k(i,j_{i,s})|}\ot\C^{m_{i,j_{i,s}}}\right) \label{eq:rearranged} \\
        \cong &\bigotimes_{k=1}^\h \left( (\C^{n_k})^{\ot \sum_{i=1}^\g\sum_{s=1}^{|\mu^i|} |\nu^k(i,j_{i,s})|} \ot \C^{\prod_{i=1}^\g\prod_{s=1}^{|\mu^i|} m_{i,j_{i,s}}} \right) \label{eq:simplified} \\
        \cong& \left( \bigotimes_{k=1}^\h (\C^{n_k})^{\ot N_k} \right) \ot \C^{m}, \label{eq:final_form}
    \end{align}
    where $N_k := \sum_{i=1}^\g\sum_{s=1}^{|\mu^i|} |\nu^k(i,j_{i,s})|$ and $m=\prod_{i=1}^\g\prod_{s=1}^{|\mu^i|} m_{i,j_{i,s}}$.

    \item \label{step:final_measure}
    We apply $\h$ Schur transforms on the spaces $(\C^{n_k})^{\ot N_k}$ for $k=1,\dots,\h$, to get $\h$ registers expressed in the Schur basis, and the final register being $H$-invariant. For each of the first $\h$ registers we measure the corresponding partition and $\GL(n_k)$-module subregisters. We accept if and only if the outcome for each register $k$ is $(\la^k, p_k)$ for some predetermined fixed state $\ket{p_k} \in \irr{\la^k}{n_k}$ (we may for example choose $\ket{p_k}$ to be a Gelfand-Tsetlin basis element).
\end{enumerate}
 We show a circuit representation of the algorithm in \cref{fig:genalg}.

\begin{figure}[t]
\centering
\scalebox{0.77}
{
\includegraphics{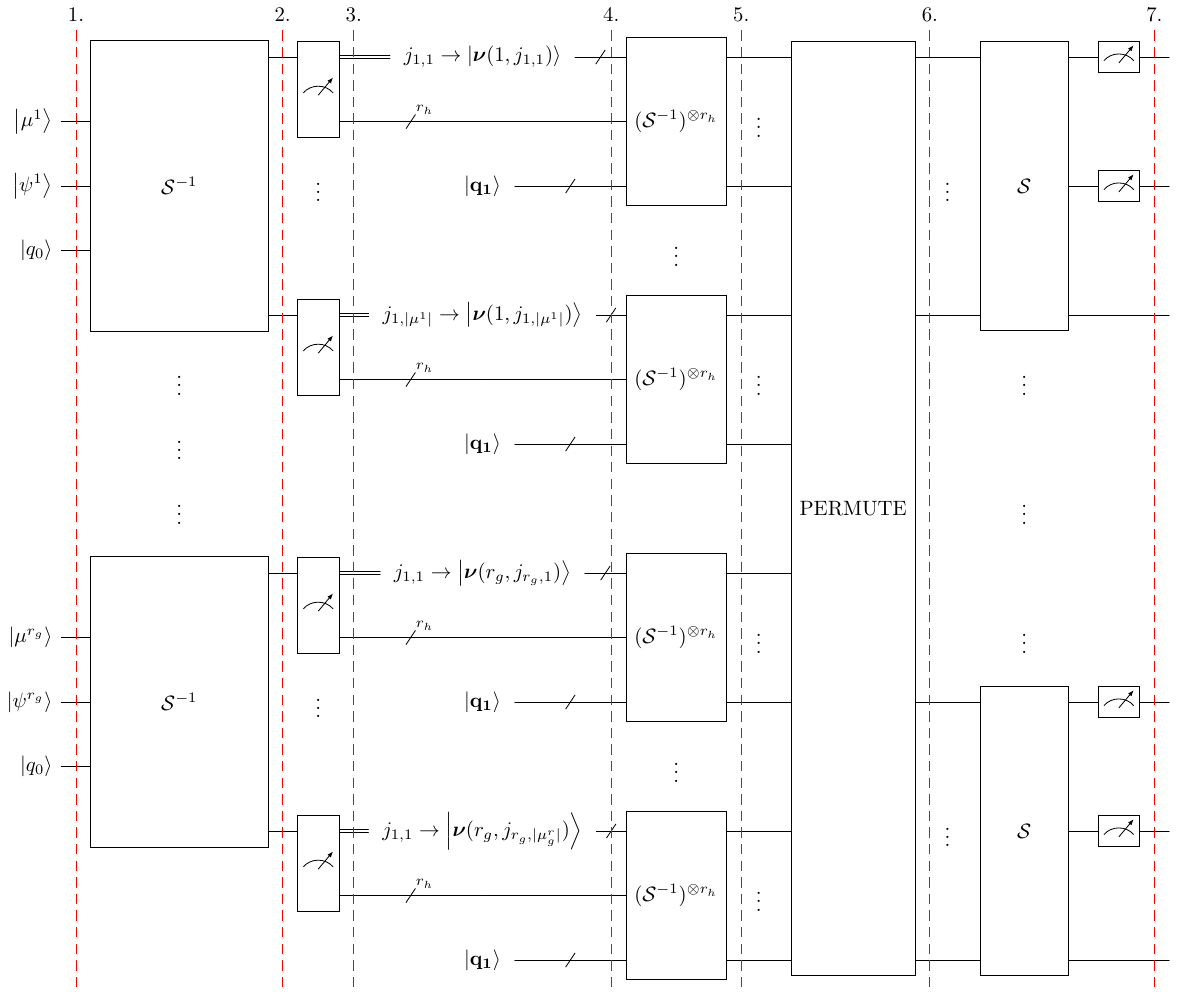}
}
\caption{A circuit representation of the algorithm for \cref{prob:branching_computational}. The states $\ket{q_0}$ and $\ket{\vec{q_1}}$ are arbitrary states used for embedding purposes, and their specific initialization does not impact the algorithm correctness. The permute gate applies the rearrangement described in Step~\ref{step:rearrange}. For clarity purposes we avoid drawing the registers containing the invariant spaces for the $H$-representations coming from the decomposition of the defining modules for $G$ from Step~\ref{step:measure} and onward.}
\label{fig:genalg}
\end{figure}
\paragraph{}
Using this algorithm we can prove:

\begin{thm}[\cref{thm: general}]\label{thm: plethysm final}
    \cref{prob:branching_computational} is in $\#\BQP$ and, consequently, the associated decision problem is in $\QMA$.
\end{thm}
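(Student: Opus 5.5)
The plan is to show that the verifier of \cref{sec: verifier}, with exact Schur transforms, implements a projective measurement on the witness space $\irr{\vmu}{G}$ whose accepting subspace has dimension exactly $m_{\vmu,\vla}$. We then account for implementation errors and for efficiency. The $\QMA$ statement follows because $m_{\vmu,\vla}>0$ if and only if the accepting subspace is nonzero.

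\textbf{Equivariance.} Steps~\ref{step:embed}--\ref{step:rearrange} together define an isometry $J$ from $\irr{\vmu}{G}$ into
\[
\bigoplus_{\vec j}\Big(\bigotimes_k(\C^{n_k})^{\ot N_k(\vec j)}\Big)\ot\C^{m(\vec j)},
\]
where $\vec j=(j_{i,s})$ ranges over the intermediate measurement outcomes. The measurement in Step~\ref{step:measure} is a projective measurement onto $H$-isotypic blocks, so we can defer it and treat the outcome register coherently; this makes $J$ an isometry. We check that $J$ is $H$-equivariant stage by stage:
\begin{itemize}[noitemsep]
\item By Schur--Weyl duality, the inclusion $\irr{\mu^i}{d_i}\hookrightarrow(\C^{d_i})^{\ot|\mu^i|}$, $v\mapsto \Sch^{-1}\ket{\mu^i}\ket v\ket{q_0}$, is $\GL(d_i)$-equivariant. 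It is therefore $H$-equivariant after composing with $\phi_i$.
\item With the identification of \cref{def:specification}, $\C^{d_i}$ carries the $H$-action block-diagonally in the computational basis. So the coherent decomposition into blocks $\irr{\vnu(i,j)}{H}\ot\C^{m_{ij}}$ is $H$-equivariant, with $H$ acting trivially on the outcome label.
\item Each inclusion $\irr{\nu^k}{n_k}\hookrightarrow(\C^{n_k})^{\ot|\nu^k|}$ in Steps~\ref{step:embed_H}--\ref{step:second Schur transform} is $\GL(n_k)$-equivariant, again by Schur--Weyl duality.
\item The permutation in Step~\ref{step:rearrange} commutes with the diagonal $H$-action.
\end{itemize}
Hence $J\,\rho_{\vmu}(\phi(h))=\rho_{\mathrm{model}}(h)\,J$ for all $h\in H$.

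\textbf{Counting.} The final acceptance projector in the model space is
\[
P=\bigotimes_k \Sch_k^\dagger\big(\ket{\la^k}\!\bra{\la^k}\ot\ket{p_k}\!\bra{p_k}\ot\id\big)\Sch_k\ot\id_{\C^m}.
\]
On the $\vla$-isotypic component $\irr{\vla}{H}\ot M$ of the model space, $P$ acts as $\ket{\vec p}\!\bra{\vec p}\ot\id_M$; elsewhere it is zero. Now consider $Q=J^\dagger PJ$. This is an orthogonal projection, because $J$ is $H$-equivariant: it maps the $\vla$-isotypic component of $\{\vmu\}\restr\cong\bigoplus_{\vla}\irr{\vla}{H}\ot\C^{m_{\vmu,\vla}}$ into the $\vla$-isotypic component of the model space, acting as $\id\ot A$ with $A$ an isometry. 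It sends every other isotypic component into the corresponding component, which $P$ annihilates. Therefore
\[
Q=\ket{\vec p}\!\bra{\vec p}\ot\id_{\C^{m_{\vmu,\vla}}},
\]
whose rank is exactly $m_{\vmu,\vla}$. This gives the decomposition $\mathcal H_{\rm acc}\oplus\mathcal H_{\rm rej}$ with acceptance probabilities $1$ and $0$.

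\textbf{Efficiency and errors.} We take the witness to be encoded in $O(|\mu^i|\log d_i)$ qubits via GT patterns, with invalid encodings rejected. The counts of the algorithm are:
\begin{itemize}[noitemsep]
\item The number of Schur transforms is $\g+\h\sum_i|\mu^i|+\h$.
\item Each transform acts on at most $\max_{i,j}\sum_i|\mu^i|\,|\nu^k(i,j)|$ tensor factors, which is polynomial in the unary input.
\item \cref{thm: high dimensional} gives cost $\poly(\cdot)\polylog(d_i,n_k,\varepsilon^{-1})$ per transform. Since $\log d_i$ is polynomial in the input, computed via \eqref{eq: hookcontent}, the total cost is polynomial.
\end{itemize}
Choosing $\varepsilon$ inverse-polynomially small per gate keeps the total error below $1/3$, so accepted states succeed with probability at least $2/3$ and rejected states with probability at most $1/3$.

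\textbf{Main obstacle.} The step requiring care is Step~\ref{step:measure}. We need the blockwise GT/multiplicity decomposition of $\C^{d_i}$, in the basis fixed by \cref{def:specification}, to be efficiently computable. Concretely, a computational basis index of $\C^{d_i}$ must map efficiently to a triple $(j,\text{GT pattern},\text{multiplicity index})$. I would do this by ranking and unranking GT patterns, counting with the hook-content formula, which takes polynomial time. The equivariance argument then relies only on \cref{lem:defining representations}.
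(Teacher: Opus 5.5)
Your proposal is correct and follows the paper's proof. It uses the same verifier, derives acceptance and rejection from $H$-equivariance of the embeddings plus Schur's lemma on the $\vla$-isotypic component, and gets efficiency from the high-dimensional Schur transform with $\log d_i$ polynomially bounded; your deferred-measurement isometry $J$ and the identity $J^\dagger P J=\ket{\vec p}\!\bra{\vec p}\ot\id$ just make the paper's Schur-lemma argument explicit.

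The only divergence is Step~\ref{step:measure}. The paper does not rank tableaux but takes the computational basis of $\C^{d_i}$ to be the labels $(j,\text{GT pattern},\text{multiplicity index})$ by convention (\cref{def:specification}). If you want an explicit ranking over an exponentially large alphabet, you would need counts of partially prescribed (skew or flagged) tableaux, e.g.\ via Jacobi--Trudi-type determinants, rather than the hook-content formula alone.
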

\begin{proof}
We need to prove the completeness and soundness of the algorithm, and that the circuit is polynomial in the input size.

\paragraph{}
As for the completeness and soundness, we claim the algorithm accepts, with a failure probability equal to that of the circuits used to implement the transforms, any witness from the space
\[
W:= \ket{\vec{p}}\ot\C^{m_{\vla,\vmu}}\subseteq\irr{\vla}{H}\ot\C^{m_{\vla,\vmu}}\subseteq\{\vmu\}\restr\quad (\text{see \cref{eq:restr repeat}}),
\]
where $\ket{\vec{p}}:=\ket{p_1}\ot\ket{p_2}\ot\cdots\ot\ket{p_{\h}}$, and rejects, with the same probability, each state from its orthogonal complement $W^\perp\subseteq \{\vmu\}\restr$.
\paragraph{}
Suppose we are given as a witness state $\ket{\psi}\in W$. Since the inclusions of Step~\ref{step:embed} are $H$-equivariants operations, and the Schur transforms of Step~\ref{step:schur_inverse} preserve the $G$-module (and therefore also the $H$-module) structure, our state will now be living in the $\irr{\vla}{H}$-isotypic component of the product space
\[
\bigotimes_{i=1}^{\g}(\C^{d_i})^{\ot |\mu^i|},
\]
and in fact will still be of the form $\ket{\vec{p}}\ot \ket{x}$ for some $\ket{x}$ living in a multiplicity space.

Measuring the label of the $H$-modules in Step~\ref{step:measure} does not change this, as it just tells us which specific product of $H$-modules gives us the copy of $\irr{\vla}{H}$ the state gets projected into. With Steps~\ref{step:embed_H}-\ref{step:final_measure} we perform \emph{strong Fourier sampling}, meaning we measure the irreducible representation label and the basis element for the irreducible module, at the cost of embedding once again the state in a tensor product space. As all operations we executed in the circuit are $H$-equivariant, by Schur's lemma the outcome of the last measurement will be exactly $(\la_k,p_k)$ on each register, and we will therefore accept.

\paragraph{}
Conversely, suppose the witness is in $W^\perp$, by the same reasoning, by Step \ref{step:final_measure} of the algorithm, the state will still be orthogonal to the subspace $\ket{\vec{p}}\ot M$, where $M$ represents the multiplicity space for the $\irr{\vla}{H}$-isotypic component of the module. As such the the measurement is guaranteed (up to the precision of the circuits for the Schur transform) to  reject the measurement result.
\paragraph{}
As for the complexity of the algorithm, in step \ref{step:embed}, the preparation of the auxiliary partition and the permutation register states can be performed efficiently, since we can use the hook length formula to calculate the dimension of the Specht modules $[\mu^i]$ and prepare a state of that size. The $\g$ inverse Schur transforms in Step \ref{step:schur_inverse}, on $\bigotimes_{i=1}^\g (\C^{d_i})^{\ot|\mu_i|}$, have combined gate complexity
\[
\mathcal{O}(\sum_{i=1}^\g|\mu_i|^4\polylog(|\mu^i|,d_i)).
\]
In order to prove that the $d_i$ is at most exponential in the input size (note that this is not needed if we assume $d_i$ is given as part of the input), we write
\[
d_i=\sum_{j=1}^{t_i} {m_{i,j}}\dim(\irr{\vec{\nu}(i,j)}{H}),
\]
and, as $\dim\irr{\nu^k(i,j)}{n}\leq n^{|\nu^k(i,j)|}$), by expanding the definition of $\irr{\vnu(i,j)}{H}$ we get
\begin{align*}
    &d_i=\sum_{j=1}^{t_i} m_{i,j}\prod_{k=1}^{\h} \dim(\irr{\nu^k(i,j)}{n_k})\\
    \leq&\max_{j,k}\ t_i\ m_{i,j}(\dim(\irr{\nu^k(i,j)}{n_k}))^\h\\
    \leq &\max_{j,k}\ t_i\ m_{i,j}\  n_k^{|\nu^k(i,j)|\h}\\
    \in &\,\mathcal{O}(\max_{j,k}\ t_i\ m_{i,j}\exp(|\nu^k(i,j)|\h\log n_k)))
\end{align*}
As a consequence the logarithm of this quantity is polynomial in the input size.
In Step \ref{step:measure}
we perform a polynomial amount of measurements, then, in Step \ref{step:embed_H}, for each of the outcomes we prepare the corresponding partition and symmetric-module register. Again this can be done efficiently. The Schur transforms in Steps~\ref{step:second Schur transform} and \ref{step:final_measure} are transforms on $H$, as such they are polylogarithmic in the $n_k$ and polynomial in the partitions size, and can therefore be efficiently implemented, as there is only a polynomial amount of them. Finally, still in Step~\ref{step:final_measure}, we perform a polynomial amount of measurements.
 This shows that each step can be achieved in polynomial time.
\end{proof}

\section{Classical algorithms via characters}\label{sec: classical}

The problems of computing Kostka, Littlewood-Richardson, Kronecker and plethysm coefficients is central in algebraic combinatorics. While most work is focused on finding positive formulas, efficient signed presentations have also been considered. It is clear from general principles that all of these coefficients are in $\GapP$. Kostka and Littlewood-Richardson coefficients are also equal to the number of integer points in polytopes of dimensions $\mathcal{O}(\ell(\lambda)^2)$, and for them Barvinok's algorithms can be applied~\cite{barvinok1994polynomial}, see Ref.~\cite{panova2025polynomial} for an explicit analysis. The Kronecker coefficients have also been extensively studied, see Refs.~\cite{CDW,PPcomp,burgisser2008complexity}. The computation of plethysm coefficients was considered in some more special cases in Ref.~\cite{FI20}, where it was shown these are in $\GapP$. We will now investigate the general problem.

\begin{thm}\label{thm:general_gapp}
    Computing the dimension-independent branching multiplicities, ~\cref{prob:schur_branching} is in $\GapP$ and can be done in time $\mathcal{O}(N^{\h dn})$, where $N$ is the input size and $d=\sum d_i$, $n=\sum n_i$.
\end{thm}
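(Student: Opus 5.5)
We will work with the character formulation of \cref{prob:schur_branching} and reduce to a finite number of variables. The Hall inner product with $s_{\vla}$ is the coefficient extraction
\[
\langle F, s_{\vla}\rangle=\big[\mathbf{x}^{\la^1+\delta_{n_1}}\cdots\big]\Big(\prod_{k=1}^{\h}a_{\delta_{n_k}}(\mathbf{x}^k)\Big)F(\mathbf{x}^1,\dots,\mathbf{x}^\h).
\]
Here $a_{\delta}$ is the Vandermonde determinant, and we restrict each $\mathbf{x}^k$ to $n_k:=\ell(\la^k)$ variables, which suffices because $F$ is a Schur-positive combination.

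The Vandermonde factor is a signed sum over $\sigma\in S_{n_k}$. Hence
\[
m_{\vmu,\vla}=\sum_{\sigma_1,\dots,\sigma_\h}\prod_k\operatorname{sgn}(\sigma_k)\,\big[\mathbf{x}^{\vla+\delta-\sigma(\delta)}\big]F,
\]
with the convention that the coefficient vanishes when an exponent is negative. It therefore suffices to show that the monomial coefficients of $F=\prod_i s_{\mu^i}[P_i]$ in finitely many variables are $\#\P$ functions (in fact they are nonnegative), and then use closure of $\GapP$ under exponential signed sums with polynomial-time-checkable indices.

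For the monomial coefficients we expand combinatorially. The polynomial $P_i$ in the variables $\mathbf{x}$ is a sum of monomials $x^{\beta}$, indexed by triples $(j,c,T)$, where $c\le m_{i,j}$ and $T=(T^1,\dots,T^\h)$ is a tuple of SSYT of shapes $\nu^k(i,j)$. Call these triples the ``letters'' of $P_i$. Then $s_{\mu^i}[P_i]=\sum_{U} \prod_{\text{boxes}} x^{\beta(U(b))}$, where $U$ ranges over SSYT of shape $\mu^i$ with entries in the letter alphabet, for any fixed total order on the letters. A letter is a polynomial-size object, and the order can be taken lexicographic, so semistandardness is checkable in polynomial time. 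A witness for the monomial coefficient is thus a tuple $(U_1,\dots,U_\g)$ whose total weight equals the target exponent, which is verifiable in polynomial time. This places the coefficient in $\#\P$, and hence $m_{\vmu,\vla}\in\GapP$. The main subtlety is justifying the finite-variable reduction and the validity of the SSYT expansion for plethysm over an infinite alphabet. The latter is standard: $s_\mu[g]=s_\mu(\text{monomials of }g)$ with multiplicity. We also need that only letters whose exponents are bounded by the target can contribute, which follows from positivity.

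For the running time, we instead count by dynamic programming over the ``weight'' rather than enumerating tableaux. The number of distinct monomials in $\h$ groups of at most $n$ variables of degree at most $N$ is $N^{O(\h n)}$. Multiplying out $s_{\mu^i}[P_i]$ can be done by computing $P_i$'s monomial expansion and then expanding $s_{\mu^i}$ through power sums, or through $h$/$e$ via Jacobi–Trudi in $\ell(\mu^i)\le d_i$ rows. Each plethystic power $p_r[P_i]$ and each product is truncated to degree bounded by $|\la|$. Keeping polynomials as dense arrays indexed by exponent vectors gives arithmetic on $N^{O(\h n)}$ coefficients. The Jacobi–Trudi determinant of size $\ell(\mu^i)$ is expanded over $\le d$ rows, which contributes the extra $d$ in the exponent, for total time $N^{O(\h d n)}$. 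The step I expect to need the most care is this accounting, specifically ensuring that the degree truncation and the number of rows used combine into the stated $\mathcal{O}(N^{\h dn})$ bound rather than something larger.
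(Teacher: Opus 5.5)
Your $\GapP$ argument is correct and has the same skeleton as the paper's proof: expand $\prod_i s_{\mu^i}[P_i]$ into monomials, then extract the coefficient of $s_{\vla}$ by a signed sum over permutations.

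\begin{itemize}
\item Your Vandermonde extraction $[\mathbf{x}^{\la+\delta}]\,a_\delta F$ yields exactly the formula the paper obtains from Jacobi--Trudi together with $\langle m_\alpha,h_\beta\rangle=\delta_{\alpha\beta}$.
\item Your tableaux over the alphabet of letters are the paper's steps 1--2 regrouped: the Kostka expansion of $P_i$, then $s_{\mu^i}=\sum_\beta K_{\mu^i\beta}m_\beta$ and the formula for $m_\beta[f]$.
\item What your version adds is an explicit polynomial-size, polynomial-time checkable witness $(\sigma_1,\dots,\sigma_{\h},U_1,\dots,U_{\g})$. Membership in $\GapP$ then follows by splitting on the sign. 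The paper instead computes the coefficients $C^i_{\mathbf a}$ by enumeration and only asserts that the resulting exponential signed sum lies in $\GapP$.
\end{itemize}

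One small correction: the reduction to $\ell(\la^k)$ variables does not need Schur positivity. It only uses that $s_\rho$ vanishes in fewer than $\ell(\rho)$ variables, and that for $\ell(\rho)\le\ell(\la)$ the coefficient of $\mathbf{x}^{\la+\delta}$ in $a_{\rho+\delta}$ is $1$ if $\rho=\la$ and $0$ otherwise.

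For the running time you take a different algorithm. You use dense arrays truncated at the target multidegree and Jacobi--Trudi in the $h_r[P_i]$. The paper enumerates $|\mu^i|$-tuples of monomials of $P_i$ (it effectively reads $d_i$ as $|\mu^i|$), which is where its $(N^{\h n})^{d}$ comes from.

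As you suspected, your accounting does not give the literal $\mathcal{O}(N^{\h dn})$. Naive expansion of the $\ell(\mu^i)\times\ell(\mu^i)$ determinant costs a factor $\ell(\mu^i)!\le N^{d_i}$. That factor adds to the exponent rather than multiplying it, so you get $N^{O(\h n)+d}$, and hence only $N^{O(\h dn)}$. The paper's derivation is no tighter about constants in the exponent, so this is not a deficiency relative to it. A division-free determinant algorithm (e.g.\ Berkowitz's) over the truncated polynomial ring would remove the $\ell(\mu^i)!$ factor altogether.

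Drop the power-sum alternative: it requires the characters $\chi^{\mu^i}(\rho)$ and the rational weights $1/z_\rho$, and is not obviously within budget.
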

\begin{proof}
     We are given as input the data $$P_i=\sum_{j=1}^{t_i} m_{i,j} s_{\nu(i,j)}(\mathbf{x}^1,\ldots,\mathbf{x}^\h),$$
    which is a sum of products of Schur functions $s_{\nu^k(i,j)}(\mathbf{x}^k)$. The $P_i$'s are multivariate polynomials in the $\h$-many sets of variables, which we expand into sums of monomials and plug these monomials as variables in $s_{\mu^i}[P_i]$.
A general algorithm would run through the following steps.

\begin{enumerate}
    \item Determine the expansion of $P_i$ into monomials as follows. We have that $$s_{\nu^k(i,j)}(\mathbf{x}^k) = \sum_{\alpha^k(i,j)\vdash n_k} K_{\nu^k(i,j),\alpha^k(i,j)} m_{\alpha^k(i,j)}(\mathbf{x}^k),$$ where $m_{\alpha}$ are the monomial symmetric functions and so
    $$P_i = \sum_{j=1}^{t_i} \sum_{\alpha^k(i,j)\vdash n_k: k=1..h} m_{i,j} \prod K_{\nu^k(i,j),\alpha^k(i,j)} m_{\alpha(i,j)}(\mathbf{x}^1,\ldots,\mathbf{x}^\h), $$
    where $m_{\alpha}(\mathbf{x}^1,\ldots,\mathbf{x}^\h) = m_{\alpha^1}(\mathbf{x}^1)m_{\alpha^2}(\mathbf{x}^2)\cdots m_{\alpha^k}(\mathbf{x}^\h)$.
    The Kostka coefficients can be computed in $\mathcal{O}(n_k^{n_k})$ time (this upper bound can be significantly reduced depending on the partitions $\nu$ involved). The exponential time algorithm can be more efficiently implemented via Barvinok's lattice point count. We can thus compute the coefficients $M_{\alpha(i,j)}$ in time $\mathcal{O}(n^{\h n})$, where $n=n_1+\cdots$ is the total size:
    $$P_i = \sum_{j=1}^{t_i} \sum_{\alpha(i,j)} M_{\alpha(i,j)} m_{\alpha(i,j)}(\mathbf{x}^1,\ldots,\mathbf{x}^\h).$$
    The total number of partitions $\alpha(i,j)$ to do this for is bounded above by $(2^{n})^{n^2}=\mathcal{O}(2^{n^3})$
    \item Compute the expansion of $s_{\mu^i}[P_i]$ into monomials in each set of variables $\mathbf{x}^1,\mathbf{x}^2,\ldots$. This can be done by expanding $s_{\mu^i} = \sum_{\beta^i \vdash d_i} K_{\mu^i,\beta^i} m_{\beta^i}$ in conservatively bounded time $\mathcal{O}(2^{d_i} d_i^{d_i})$. Note that if $f=\sum_{\alpha^i \in R} x^{\alpha^i}$, where $R$ is a multiset of compositions, then
    $$m_\beta[f] = \sum_{\sigma} \sum_{i_1<\cdots <i_r} x^{\sum_j \sigma_j(\beta)\alpha^{i_j}},$$
    where $r =\ell(\beta)$ and $\sigma\in S_r$ goes over all distinct permutations of the parts of $\beta$. The size of $R$ is bounded by $n^{\h n}$ from the previous bounds.
    We can then expand
    $$s_{\mu^i}[P_i] = \sum_{\mathbf{a}} C_{\mathbf{a}}^i (\mathbf{x}^1)^{\mathbf{a}^1} \cdots (\mathbf{x}^\h)^{\mathbf{a}^\h}, $$
    where $\mathbf{a}$ goes over $\h$-tuples of compositions of $|\mu^i|=d_i$, and $C$ are coefficients (which are nonnegative integers). Since the function is symmetric in each set of variables, we have that $C_{\mathbf{a}}$ is invariant under permuting the parts in each $\mathbf{a}^i$. Computing the above expansion can then be bounded by $\mathcal{O}(\prod 2^{d_i}d_i^{d_i} (n^{\h n})^{d_i})=\mathcal{O}(n^{\h d})$ where $d = d_1+d_2+\cdots$.
    \item Take the inner product with $s_{\vla}(\mathbf{x}^1,\ldots,\mathbf{x}^\h)$. To do that we use the fact that $\langle m_{\alpha},h_{\beta}\rangle = \delta_{\alpha,\beta}$, where the $h_\beta$'s are the homogeneous symmetric functions. By the Jacobi-Trudy identity we have
    $$s_\la = \det [h_{\la_i-i+j}]_{i,j=1}^{\ell(\la)} = \sum_{\pi \in S_{\ell(\la)}} {\rm sgn(\pi)} h_{\la-{\rm id} + \pi} ,$$
    where ${\rm id} =(1,2,\ldots,\ell(\la))$ and $\pi$ is treated as a vector likewise. We can then compute
    \begin{align*}
        &\langle s_{\mu^1}[P_1]\cdots s_{\mu^\g}[P_\g], s_{\la^1}(\mathbf{x}^1)\cdots s_{\la^\h}(\mathbf{x}^\h)\rangle \\
        &\qquad = \langle \prod_j  \sum_{\mathbf{a}(j)} C_{\mathbf{a}(j)}^j (\mathbf{x}^1)^{\mathbf{a}(j)^1} \cdots (\mathbf{x}^\h)^{\mathbf{a}(j)^\h} , \prod_i \sum_{\pi^i} {\rm sgn}(\pi^i) h_{\la^i -{\rm id} +\pi^i}(\mathbf{x}^i) \rangle\\
        &\qquad = \sum_{\pi^1,\cdots,\pi^\h \in S_{\ell(\la)}} {\rm sgn}(\pi^1\cdots \pi^\h) \sum_{\substack{ \sum_j \mathbf{a}(j)^i = \la^i-{\rm id}+\pi^i \\ \text{for } i=1,...,\h}} \prod_j C_{\mathbf{a}(j)}^j,
    \end{align*}
    where in the last part we matched the total degree for each set of variables and $\ell(\la)$ is the maximal length of the partitions appearing in $\la$.

\end{enumerate}
    Collecting the positive and negative terms in this formula and realizing that we have exponentially large (but not beyond) sums, puts the last expression in $\GapP=\#\P - \#\P$. The runtime bound comes from the bottleneck of computing the coefficients $C$ in the second step and the sum over all permutations.
\end{proof}

\subsection{Classical computation of plethysms \texorpdfstring{$a^\la_{\mu,\nu}$}{almn}}

The above algorithm is general and has inefficiencies for specific cases. A classical algorithm for plethysms is given in Ref.~\cite[Sec. 9]{FI20}, which clearly shows that the problem is in $\GapP$. However, in some specific instances, one can find a polynomial time algorithm, see e.g.~Refs.~\cite{panova2025polynomial,pps25}. Here we will extend these results.

\begin{thm}\label{thm:pleth_classical}
    Let $\mu \vdash m$, $\nu \vdash d$ and $\la \vdash n:=md$ with $\ell(\lambda)=k$. Suppose that $k$ and $m$ are fixed constants. Then $a^\la_{\mu,\nu}$ can be computed in time $O(d^{m\binom{k}{2}})$ and so the problem of computing the plethysm for such inputs is in $\FP$.
\end{thm}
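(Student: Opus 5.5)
Since $\ell(\la)=k$, we may set all variables beyond $x_1,\dots,x_k$ to zero. Then $s_\mu[s_\nu](x_1,\dots,x_k)=\sum_{\la}a^\la_{\mu,\nu}s_\la(x_1,\dots,x_k)$, where the sum is over $\la$ with $\ell(\la)\le k$, and no other coefficient is lost. To extract $a^\la_{\mu,\nu}$ we use the alternant trick. With $\delta=(k-1,k-2,\dots,0)$,
\[
a^\la_{\mu,\nu}=[x^{\la+\delta}]\,\Bigl(\prod_{i<j}(x_i-x_j)\Bigr)\, s_\mu[s_\nu](x_1,\dots,x_k).
\]
Equivalently, as in the proof of \cref{thm:general_gapp}, we can write this as the signed sum $\sum_{\pi\in S_k}\mathrm{sgn}(\pi)\,[x^{\la-\mathrm{id}+\pi}]\,s_\mu[s_\nu]$. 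Since $k!$ is a constant, it suffices to compute single monomial coefficients $[x^{\beta}]\,s_\mu[s_\nu](x_1,\dots,x_k)$ for $\beta$ a composition of $n=md$ into $k$ parts.

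Next we expand $s_\mu$ in monomial symmetric functions: $s_\mu=\sum_{\beta\vdash m}K_{\mu\beta}m_\beta$. Since $m$ is fixed, this is a sum with a constant number of terms, each computable in constant time. The inner function $s_\nu(x_1,\dots,x_k)=\sum_{\alpha}K_{\nu\alpha}x^\alpha$ is a sum over compositions $\alpha$ of $d$ into $k$ parts. There are $O(d^{k-1})$ such $\alpha$, and each Kostka number $K_{\nu\alpha}$ with $k$ rows of content can be computed in time polynomial in $d$ for fixed $k$ (by counting GT patterns with a $k$-row dynamic program, or via Barvinok as in Ref.~\cite{panova2025polynomial}). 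The plethysm $m_\beta[s_\nu]$ substitutes the monomials $x^\alpha$, counted with multiplicity $K_{\nu\alpha}$, as variables. Its coefficient of $x^{\gamma}$ is a sum over choices of $\ell(\beta)\le m$ monomials from this multiset with exponents given by the parts of $\beta$, and with multiplicities handled combinatorially. This is a bounded number of elementary symmetric or monomial polynomial evaluations in the $K_{\nu\alpha}$'s. Concretely, we sum over $r$-tuples $(\alpha^{1},\dots,\alpha^{r})$ with $r\le m$ and $\sum_j\beta_j\alpha^{j}=\gamma$. Because the last part of each $\alpha^j$ is determined by the others and the final constraint removes one more degree of freedom, this gives $O(d^{m(k-1)})$ terms. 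Each term is a polynomial expression in the Kostka numbers, with correction factors for repeated monomials coming from the multiset/distinct-index condition in $m_\beta[f]$.

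To reach the stated exponent $m\binom{k}{2}$, I would refine the bound on the number of relevant tuples rather than use $d^{m(k-1)}$ naively. One route is to expand $s_\mu[s_\nu]$ through $s_\nu$ written via Jacobi--Trudy or the Weyl character formula in $k$ variables. The inner function is then a ratio of alternants, and the relevant data per inner factor is a weight lying in a region whose number of lattice points is $O(d^{\binom{k}{2}})$, which is the number of GT patterns of fixed shape with $k$ rows, i.e. $\dim\{\nu\}_k=O(d^{\binom k2})$. Taking $m$ such factors gives $O(d^{m\binom{k}{2}})$ summands, each evaluated in polynomial time. Alternatively, one can simply enumerate the $m$-tuples of SSYT/GT patterns of shape $\nu$ with entries $\le k$. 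Each has at most $d^{\binom k2}$ choices, so there are at most $(\dim\{\nu\}_k)^m\le d^{m\binom k2}$ tuples in total. The monomial coefficients of $s_\mu[s_\nu]$ are then accumulated directly from the definition $s_\mu[f]=s_\mu(x^{T_1},x^{T_2},\dots)$ restricted to at most $m$ distinct plethystic variables, and the final extraction follows from the first paragraph. This enumeration is what I would commit to.

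The main obstacle is making this enumeration rigorous. A plethystic substitution involves infinitely many formal variables, one per SSYT of shape $\nu$, but $s_\mu$ has degree $m$, so only monomials in at most $m$ of them contribute. I would handle this via $s_\mu=\sum_\beta K_{\mu\beta}m_\beta$, reading $m_\beta$ as a sum over injective assignments of the $\ell(\beta)$ parts to distinct tableaux. This gives an exact formula whose index set is a subset of $m$-tuples of tableaux, which yields the bound $O(d^{m\binom k2})$ up to polynomial factors absorbed for fixed $k,m$.
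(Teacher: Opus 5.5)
Your committed route is exactly the paper's proof. You restrict to $x_1,\dots,x_k$, use the $D=\dim\{\nu\}_k=O(d^{\binom{k}{2}})$ SSYT of shape $\nu$ (each contributing the monomial $x^{T}$) as the plethystic variables, expand $s_\mu$ in them to get at most $D^m$ monomials, and extract $a^\la_{\mu,\nu}$ via the signed sum over $S_k$ obtained by multiplying by the Vandermonde determinant.

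Two small remarks. First, the weight-grouped count $O(d^{m(k-1)})$ you set aside is not a naive bound that needs refining: since $k-1\le\binom{k}{2}$, it already meets (and for $k\ge 3$ beats) the stated exponent once all $K_{\nu\alpha}$ are precomputed in one pass over the GT patterns. Second, summing over injective assignments of the parts of $\beta$ to tableaux counts each monomial of $m_\beta$ once per permutation of equal parts of $\beta$, so you must divide by the product of the factorials of those part multiplicities.
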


\begin{proof}
We specialize the analysis from the proof of~\Cref{thm:general_gapp} to the more direct
$$a^\lambda_{\mu\nu}=\langle s_\lambda, s_\mu[s_\nu]\rangle.$$
We have that
\begin{equation}\label{eq:pleth_expansion}
s_\mu[s_\nu(x_1,x_2,\ldots)]= \sum_\lambda a^\lambda_{\mu,\nu} s_\lambda(x_1,x_2,\ldots),
\end{equation}
    which holds for all specializations of the variables. As the Schur functions $s_\lambda$ for $\ell(\lambda) \leq k$ form a basis for the ring of symmetric polynomials in $k$ variables, it is enough to set $\mathbf{x}=(x_1,x_2,\ldots,x_k,0,\ldots)$. We write
    $$s_\nu(x_1,\ldots,x_k) = \sum_{\alpha^i \in A_\nu} \mathbf{x}^{\alpha^i},$$
    where $A_\nu$ is the multiset of weight vectors, i.e. for each $\alpha$ there are $K_{\nu \alpha}$-many vectors (compositions) equal to $\alpha$. The number of terms appearing in that expansion is $D:=s_\nu(1^k)=O(d^{\binom{k}{2}})$. This bound comes from identifying semi-standard young tableaux of shape $\nu$ and entries $\leq k$ with Gelfand-Tsetlin patterns forming a triangle of side $k$ and base $\nu$, which are less than choosing $\binom{k}{2}$-many numbers $\leq \nu_1 \leq d$.

    Next we expand $s_\mu[s_\nu(x_1,\ldots,x_k)] =s_\mu(\mathbf{x}^{\alpha^1},\ldots,\mathbf{x}^{\alpha^D})$. This expansion would have $M$ monomials with $M\leq D^m$, let their degrees be $\beta^1,\ldots,\beta^M$, so that
    $$s_\mu[s_\nu(x_1,\ldots,x_k)] = \sum_{i=1}^M \mathbf{x}^{\sum_{r=1}^D \beta^i_r \alpha^r}.$$
    Now write $s_\lambda(x_1,\ldots,x_k) =\frac{ \det [x_i^{\lambda_j+k-j}]_{i,j=1}^k}{ \det [x_i^{k-j}]_{i,j=1}^k}$ using Weyl's determinantal formula, substitute in~\Cref{eq:pleth_expansion} and multiply both sides by the Vandermonde denominator:
    $$\sum_{\lambda} a^\lambda_{\mu,\nu} \det[x_i^{\lambda_j+k-j}] = \det [x_i^{k-j}] s_\mu[s_\nu(x_1,\ldots,x_k)]= \det [x_i^{k-j}] \sum_{i=1}^M \mathbf{x}^{\sum_{r=1}^D \beta^i_r \alpha^r}$$
    Finally, we can expand the determinant and extract the coefficient $a^\lambda_{\mu,\nu}$ as the coefficient at $x_1^{\lambda_1+k-1}\cdots x_k^{\lambda_k}$ from both sides of this identity and write
    \begin{equation}
        a^\lambda_{\mu,\nu}= \sum_{\sigma \in S_k} {\rm sgn}(\sigma) \sum_{i=1}^M \mathbf{1}[\lambda+\sigma-{\rm id} = \sum_{r=1}^D \beta^i_r\alpha^r]
    \end{equation}
    The total number of computations is then bounded by $k!MD =O(D^m) = O(d^{m \binom{k}{2}})$ and is polynomial when $m,k$ are fixed.
\end{proof}

\begin{rem}
    As shown in Ref.~\cite{FI20} we already know that for $\nu=(3)$, $\mu=(m)$ the problem of computing plethysm is $\#\P$-hard for general $\la$. So we cannot hope for efficient algorithms even in such special cases. There is no known hardness result when $\ell(\lambda)$ is fixed, and we pose this as an open problem.
\end{rem}

\begin{prob}
  Let $\mu\vdash m$, $\nu \vdash d$ and $\lambda \vdash md$ with $\ell(\lambda) =k$ a fixed integer. Does there exist a polynomial (in $md$) time classical algorithm which computes $a^\lambda_{\mu\nu}$?
\end{prob}

 \section*{Acknowledgements}

We are grateful to Christian Ikenmeyer, Vojtech Havlicek and Martin Larocca for fruitful discussions. MC and PP acknowledge financial support from the European Research Council (ERC Grant Agreement No.~818761), VILLUM FONDEN via the QMATH Centre of Excellence (Grant No.~10059) and the Novo Nordisk Foundation (grant NNF20OC0059939 `Quantum for Life'). PP acknowledges support from a PhD scholarship on Quantum Algorithms from the Danish e-infrastructure Consortium (DeiC), and from INdAM-GNFM. GP was partially supported by NSF grant CCF:2302174 and an AMS Birman fellowship, and at residence at ICERM (NSF grant DMS-1929284) in Fall 2025.
MW acknowledges support by the European Union (ERC Grant SYMOPTIC, 101040907) by the Deutsche Forschungsgemeinschaft (DFG, German Research Foundation, 556164098), by the Deutsche Forschungsgemeinschaft (DFG, German Research Foundation) under Germany's Excellence Strategy~--~EXC-2111~--~390814868, and by the German Federal Ministry of Research, Technology and Space (QuSol, 13N17173).
Part of this work was conducted while MW was visiting Q-FARM and the Leinweber Institute for Theoretical Physics at Stanford University and the Simons Institute for the Theory of Computing at UC Berkeley.
AWH was supported by a grant from the Simons Foundation (MP-SIP-00001553, AWH) and supported by a Fulbright Scholar Grant.

%=============================================================================
\appendix
\section{How to prove \texorpdfstring{$\#\BQP$}{\#BQP} for representation-theoretic multiplicities}\label{app: A}
%=============================================================================
In this appendix we outline a general approach towards proving~$\#\BQP$ results for representation-theoretic multiplicities that captures the results of this and prior works.
In this appendix we use the terms module and representation interchangeably.

Let $G$ be a group, and let $V$ be a finite-dimensional $G$-representation that decomposes into irreducible representations.\footnote{For example, this is the case if $G$ is reductive or if $V$ is a unitary representation.}
That is:
\begin{equation}\label{eq: Vdecomposition}
V\cong \bigoplus_{\la\in \Lambda}  \underbrace{V_\la \oplus \dots \oplus V_\la}_{m_\la(V)}\cong \bigoplus_{\la\in\Lambda}V_\la\ot\C^{m_\la(V)},
\end{equation}
where $\Lambda$ labels the isomorphism classes of irreducible finite-dimensional unitary representations of $G$.
We are interested in the problem of computing the multiplicity~$m_\la(V)$ when given as input the group~$G$, the representation~$V$ as well as the label~$\lambda \in \Lambda$ (in some encoding).

\begin{prp}\label{thm: generalapproach}
If, given a reductive group $G$ and a representation $V$ as in \cref{eq: Vdecomposition}, there exists a model representation $V_\text{model}$ for $G$ such that:
\begin{itemize}[noitemsep]
    \item there exists an efficiently-implementable $G$-equivariant isometric embedding $V\hookrightarrow V_{\text{model}}$,
    \item there exists an efficient quantum algorithm to perform strong Fourier sampling on $V_\text{model}$,
\end{itemize}
then the problem of computing the multiplicities $m_\la(V)$ is in $\#\BQP$, and the problem of deciding whether $m_\la(V)>0$ is in $\QMA$ (or in $\QMA_1$ if the above can be implemented exactly).
\end{prp}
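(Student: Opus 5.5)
The verifier takes as witness a state $\ket\psi \in V$, encoded in some fixed number of qubits, and proceeds in three steps. First, apply the efficient $G$-equivariant isometric embedding $\iota\colon V\hookrightarrow V_\text{model}$, padding with ancillas as needed. Second, run strong Fourier sampling on $V_\text{model}$. Concretely, apply the unitary that maps $V_\text{model}\cong\bigoplus_{\la} V_\la\ot\C^{m_\la(V_\text{model})}$ to registers $\ket{\la}\ket{p}\ket{q}$, and measure the label register and the $V_\la$ register. Third, accept if and only if the outcome is $(\la,p_0)$ for a fixed basis vector $\ket{p_0}\in V_\la$. The whole circuit then implements a two-outcome measurement on $V$ with accepting projector $P=\iota^\dagger Q\iota$. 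Here $Q$ is the projector onto $\ket{\la}\ket{p_0}\ot\C^{m_\la(V_\text{model})}$ inside $V_\text{model}$.

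The key step is to identify the range of $P$. The image $\iota(V)$ is a $G$-subrepresentation of $V_\text{model}$, and I would first note that it is completely reducible: $V$ itself decomposes as in \cref{eq: Vdecomposition}. Since $\iota$ is $G$-equivariant, Schur's lemma shows that it maps the $\la$-isotypic component $V_\la\ot\C^{m_\la(V)}$ into the $\la$-isotypic component of $V_\text{model}$. It does so in the form $\id_{V_\la}\ot T_\la$, with $T_\la\colon\C^{m_\la(V)}\to\C^{m_\la(V_\text{model})}$ an isometry. This uses that we can choose the decomposition of $V$ compatibly with the Fourier basis of $V_\text{model}$, i.e.\ pull back the isomorphism through $\iota$.

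Consequently $\iota(V)\cap\operatorname{ran}Q=\ket{p_0}\ot T_\la(\C^{m_\la(V)})$, which has dimension exactly $m_\la(V)$. Moreover $Q$ commutes with the projector onto $\iota(V)$, because that projector has the form $\bigoplus_\la \id\ot T_\la T_\la^\dagger$ in the Fourier basis. Hence $P$ is itself a projector on $V$, with rank $m_\la(V)$. The witness space therefore splits as $\mathcal H_\text{acc}=\operatorname{ran}P$, accepted with probability $1$, plus its orthogonal complement, accepted with probability $0$. When $\psi$ has a component outside the image of $V$ in the encoding, that component is rejected as well, by restricting the witness space or adding an initial check.

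Complexity and error are routine. With approximate implementations at error $\varepsilon=1/\poly$, the acceptance probabilities become $\ge 1-O(\varepsilon)$ and $\le O(\varepsilon)$, which meets \cref{def: count BQP}. Positivity $m_\la(V)>0$ is then in $\QMA$ by taking any vector in $\mathcal H_\text{acc}$ as the witness, and in $\QMA_1$ if everything is exact. I expect the main obstacle to be the commutation/projector claim, namely showing that the restricted measurement is genuinely projective with rank exactly $m_\la(V)$. This is where $G$-equivariance and Schur's lemma, hence reductivity, must be used carefully.
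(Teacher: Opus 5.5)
Your proposal is correct and follows essentially the same route as the paper: embed $V$ into $V_\text{model}$, perform strong Fourier sampling, and accept on the outcome $(\la,p_0)$, so that the accepted subspace is $\ket{p_0}\ot\C^{m_\la(V)}$. Your Schur's-lemma argument that the pulled-back operator $\iota^\dagger Q\iota$ is a projector of rank exactly $m_\la(V)$ spells out what the paper asserts in one line (``it implements a projective measurement''). The same holds for your handling of witness components outside the encoding of $V$, which the paper does not discuss.
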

\begin{proof}
    The algorithm is given by applying the embedding $V\hookrightarrow V_{\text{model}}$ and subsequently strong Fourier sampling on $V_\text{model}$.
    We accept if the outcome labeling the irreducible representation yields~$\la$, and the outcome relative to the basis states of the irreducible representation~$V_\la$ yields the label~$p$ of an arbitrary fixed basis state~$\ket{p}\in V_\la$.
    A sketch of the circuit is given in \cref{fig: embed+SFS}.
    Note that it implements a projective measurement.
    The space of witnesses that are accepted with probability one is the subspace
    \[
    \ket{p}\ot\C^{m_\la(V)}\subseteq V_\la\ot\C^{m_\la(V)}\subseteq V,
    \]
    which has dimension~$m_\la(V)$.
\end{proof}

\begin{figure}[t]
\centering\scalebox{1}{\begin{quantikz}[wire types={n,q,n}]
&\gate[3][2cm]{\rm Embedding}&\gate[3][5cm]{\rm Strong\ Fourier\ Sampling}&\setwiretype{c}\rstick{$\ket{\la}$}
\\
\lstick{witness}&&&\setwiretype{c}\rstick{$\ket{p}$}
\\
&&&\setwiretype{q}&%\wire[d]{q}\\&&&
\end{quantikz}}
\caption{Quantum circuit to establish that a representation-theoretic multiplicity is in~$\#\BQP$:
Given a witness state in the representation~$V$, first embed it into a model representation $V_\text{model}$.
Subsequently perform strong Fourier sampling, and accept if the outcomes are~$\la$ and the label~$p$ of an arbitrary fixed basis vector~$\ket{p}\in V_\la$. If both steps can be done efficiently, the multiplicity is in~$\#\BQP$.}\label{fig: embed+SFS}
\end{figure}
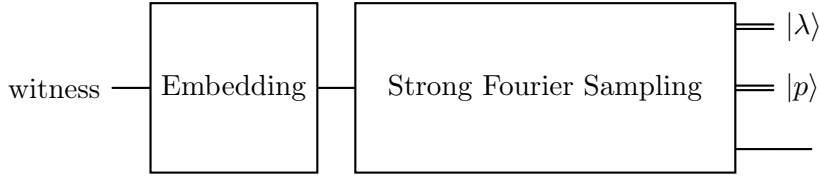

\begin{rem}\label{rem:inv}
We remark that in the notation above
\[
    m_\la(V)=\dim \hom_G(V_\la,V)=\dim \left( V\ot V_\la^* \right)^G,
\]
where $V_\la^*$ is the dual representation to $V_\la$ and~$(\cdot)^G$ is the invariant subspace.
Since the trivial representation has dimension~one, if it is possible to perform the $G$-equivariant embedding $V\ot V_\la^*\hookrightarrow V_\text{model}$ efficiently, then one only needs to apply weak Fourier sampling on~$V_\text{model}$, accepting if the outcome is the label for the trivial representation for $G$.
This is effectively part of the approach Refs.~\cite{christandl2015algcomp,bravyi2024quantum,ikenmeyer2023remark} use for the Kronecker coefficients, where the multiplicity of $[\la]$ in $[\mu]\ot[\nu]$ is interpreted as the dimension of the space of $S_n$-invariants in $[\la]\ot[\mu]\ot[\nu]$ (the representations of~$S_n$ are self-dual).
\end{rem}

We give some examples of this general construction being applied:

\begin{exa}[Plethysms]
    For the plethysm coefficients in~\cref{prob: plethysm}, the embedding step is
    \[
        V \coloneqq \{\mu\}\restr \hookrightarrow V_\text{model} \coloneqq (\C^n)^{\ot |\mu|\cdot|\nu|},
    \]
    and is implemented via two rounds of inverse Schur transforms (see \cref{eq:pleth emb} and discussion below it).
    Strong Fourier sampling for the model representation becomes in this case strong \emph{Schur} sampling, and can be efficiently implemented by a Schur transform followed by measurements.
\end{exa}

\begin{exa}[Branching multiplicities]
    For the branching multiplicities in~\cref{prob:branching_computational}, one can use a construction virtually identical to~\cref{prob: plethysm}, only on multiple representations at the same time.
    The quantum algorithm given is \cref{sec: verifier} is slightly different due to the presence of intermediate measurements.
    This is done only to simplify the algorithm and analysis; they could be made coherent without any loss, resulting in an algorithm of the form of \cref{fig: embed+SFS}.
\end{exa}

\begin{exa}[Kronecker]
    While the Kronecker coefficients~$g_{\lambda,\mu,\nu}$ fall into the general framework of branching multiplicities, they can be directly captured by the above approach.
    Let $\lambda,\mu,\nu$ be partitions of~$n$, each with no more than~$d$ parts.
    Then the Kronecker coefficients are captured by first applying the embedding
    \[
        V \coloneqq [\la]\ot[\mu]\ot[\nu] \hookrightarrow V_\text{model} \coloneqq (\C^d)^{\ot n}\ot(\C^d)^{\ot n}\ot(\C^d)^{\ot n}\cong(\C^{d^3})^{\ot n},
    \]
    using three inverse Schur transforms, and subsequently applying weak Fourier sampling, accepting if the outcome corresponds ot the trivial~$S_n$-representation (as explained in \cref{rem:inv} one could similarly use two inverse Schur transforms and strong Schur sampling).
    Ref.~\cite{christandl2015algcomp} used a close variant of this approach, with the embedding replaced by a projective measurement onto the subspace~$V \subseteq V_\text{model}$.
\end{exa}

%-----------------------------------------------------------------------------
\subsection{Generalized phase estimation}
%-----------------------------------------------------------------------------
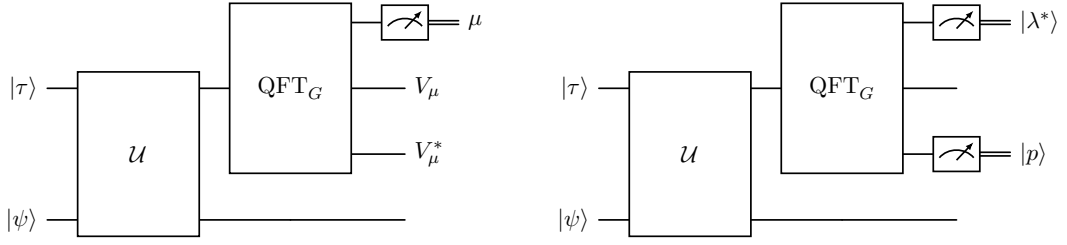
\begin{figure}[t]
    \centering
    \begin{minipage}{0.35\textwidth}
        \centering
        \scalebox{.8}{
            \begin{quantikz}[wire types={n,q,n,q}]
                &&\gate[3][2cm]{\text{${\rm QFT}_G$}}&\meter{}\setwiretype{q}&\setwiretype{c}\rstick{$\mu$} \\
                \lstick{$\ket{\tau}$}&\gate[3][2cm]{\mathcal{U}}&& \rstick{$V_\mu$}\\
                &&&\setwiretype{q} \rstick{$V_\mu^*$}\\
                \lstick{$\ket{\psi}$}&&&
            \end{quantikz}
        }
    \end{minipage}
    \qquad\qquad
    \begin{minipage}{0.35\textwidth}
        \centering
        \scalebox{.8}{
            \begin{quantikz}[wire types={n,q,n,q}]
                &&\gate[3][2cm]{\text{${\rm QFT}_G$}}&\meter{}\setwiretype{q}&\setwiretype{c}\rstick{$\ket{\la^*}$} \\
                \lstick{$\ket{\tau}$}&\gate[3][2cm]{\mathcal{U}}&& \\
                &&&\meter{}\setwiretype{q} &\setwiretype{c}\rstick{$\ket{p}$} \\
                \lstick{$\ket{\psi}$}&&&
            \end{quantikz}
        }
    \end{minipage}
    \caption{Left: Generalized phase estimation (GPE) circuit (with optional uncomputation omitted).
    Right: The circuit underlying \cref{cor:gpe bqp}.
    It is a special case of \cref{fig: embed+SFS}.}
    \label{fig:GPE}
\end{figure}

The approach of Refs.~\cite{bravyi2024quantum,ikenmeyer2023remark} is somewhat more involved and relies on the \emph{generalized phase estimation (GPE)} algorithm~\cite{harrow2005}.
We recall this primitive and explain how it naturally fits \cref{thm: generalapproach}.
Let $G$ a finite group and denote by~$\QFT_G$ the Fourier transform, which identifies
\begin{equation}\label{eq:qft}
    \QFT_G \colon \C[G] \stackrel\cong\longrightarrow \bigoplus_{\mu\in \Lambda} V_\mu\ot V_\mu^*,
\end{equation}
where $\C[G]$ is the group ring, which has a standard basis~$\ket g$ labeled by the group elements~$g \in G$.
This is equivariant for the left and for the right regular action of~$G$ on~$\C[G]$; we denote the latter by~$R \colon \C[G] \to \C[G]$, i.e.\ $R(g)\ket h = \ket{h g^{-1}}$.
Now, consider a unitary representation $\rho \colon G \to \U(V)$ and denote by $\mathcal U$ the controlled group action, i.e., the unitary
\begin{align*}
    \mathcal U \colon \C[G] \ot V \to \C[G] \ot V, \quad \ket g \ot \ket \psi \mapsto \ket g \ot \rho(g)\ket\psi
\end{align*}
Then GPE proceeds as follows~\cite[p.~158]{harrow2005}:
Given as input a state~$\ket{\psi}\in V$,
\begin{enumerate}[noitemsep]
\item prepare a register with Hilbert space~$\C[G]$ in the uniform superposition~$\ket{\tau}:=\frac{1}{\sqrt{|G|}}\sum_{g\,\in G} \ket{g}$,
\footnote{This can be done efficiently by preparing the basis state corresponding to the trivial irreducible representation and applying an inverse Fourier transform.}
\\[-.5cm]
\item apply $\mathcal{U}$,
\item apply the Fourier transform~$\QFT_G$ on the register added in step~1, and
\item measure the label~$\mu$ of the corresponding direct summand in \cref{eq:qft}
\end{enumerate}
optionally followed by an uncomputation (see \cref{fig:GPE}, left).
The first two steps implement an isometry
\begin{align*}
    J \colon V \hookrightarrow V_\text{model} \coloneqq \C[G] \ot V,
    \quad
    \ket\psi \mapsto \frac{1}{\sqrt{|G|}}\sum_{g\,\in G} \ket{g} \ot \rho(g)\ket\psi.
\end{align*}
Importantly, this isometry is $G$-equivariant, with $G$ acting on~$V_\text{model}$ via the right regular representation on~$\C[G]$, that is, $g \mapsto R(g) \ot \id$.
Indeed, for every~$g \in G$, we have
\begin{align*}
    J \rho(g) \ket\psi
= \frac{1}{\sqrt{|G|}}\sum_{h\,\in G} \ket{h} \ot \rho(hg)\ket\psi
= \frac{1}{\sqrt{|G|}}\sum_{h\,\in G} \ket{hg^{-1}} \ot \rho(h)\ket\psi
= (R(g) \ot \id) J \ket\psi.
\end{align*}
Thus we obtain the following as an immediate consequence of \cref{thm: generalapproach}.

\begin{cor}\label{cor:gpe bqp}
If, given a finite group~$G$ and a unitary representation~$V$, there exist efficient quantum circuits for the Fourier transform ${\rm QFT}_G$ and for the controlled group action $\mathcal{U}$, then the problem of computing the multiplicities~$m_\la(V)$ is in $\#\BQP$, and the problem of deciding whether $m_\la(V)>0$ is in $\QMA$ (or in $\QMA_1$ if the above can be implemented exactly).
\end{cor}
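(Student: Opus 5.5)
The plan is to derive \cref{cor:gpe bqp} directly from \cref{thm: generalapproach}, taking as model representation $V_\text{model} \coloneqq \C[G]\ot V$, with $G$ acting by $R(g)\ot\id$. Two hypotheses must be checked: an efficiently implementable $G$-equivariant isometric embedding $V\hookrightarrow V_\text{model}$, and an efficient strong Fourier sampling procedure on $V_\text{model}$. Finite groups are reductive over $\C$, and every unitary representation decomposes into irreducibles, so the decomposition \eqref{eq: Vdecomposition} holds and \cref{thm: generalapproach} applies.

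\textbf{Step 1: the embedding.} I will use the isometry $J$ defined just above. It is realized by preparing $\ket{\tau}$ and then applying $\mathcal U$.
\begin{itemize}
\item Preparing $\ket{\tau}$ is efficient: prepare the basis state of the trivial summand of \eqref{eq:qft} and apply $\QFT_G^{-1}$.
\item $\mathcal U$ is efficient by hypothesis.
\item $J$ is an isometry because $\ket\psi\mapsto\ket\tau\ot\ket\psi$ is an isometry and $\mathcal U$ is unitary.
\item $G$-equivariance, $J\rho(g)=(R(g)\ot\id)J$, was verified in the computation preceding the corollary.
\end{itemize}

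\textbf{Step 2: strong Fourier sampling on $V_\text{model}$.} Apply $\QFT_G\ot\id_V$. Since $\QFT_G$ intertwines the right regular action with the action on $\bigoplus_\mu V_\mu\ot V_\mu^*$, it gives an equivariant identification
\[
V_\text{model}\cong\bigoplus_\mu V_\mu\ot V_\mu^*\ot V,
\]
in which $G$ acts on only one tensor factor of each summand.

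The bookkeeping point is which factor that is. Under the usual convention the right action acts on $V_\mu^*$, or equivalently on $V_{\mu^*}$ after relabeling. This explains the $\ket{\la^*}$ in \cref{fig:GPE}: measuring the summand label returns the dual label $\la^*$ when the witness lies in the $V_\la$-isotypic part. Measuring the basis index of the $G$-acted factor then completes strong Fourier sampling, with the remaining factors serving as multiplicity space. All of this costs one efficient $\QFT_G$ plus computational-basis measurements.

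\textbf{Step 3: conclude.} With both hypotheses verified, \cref{thm: generalapproach} gives an accepted subspace of dimension $m_\la(V)$, so the counting problem is in $\#\BQP$ and deciding $m_\la(V)>0$ is in $\QMA$. If $\QFT_G$ and $\mathcal U$ are implemented exactly, the measurement is exactly projective, which gives $\QMA_1$.

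The only delicate point is the convention check in Step 2. One must confirm which tensor factor $R$ acts on after $\QFT_G$, and accept on the label $\la^*$ (with a fixed basis vector of the corresponding factor) rather than $\la$. Everything else is immediate.
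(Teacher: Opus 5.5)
Your proposal is correct and follows the same route as the paper. Both reduce to the general embedding-plus-strong-Fourier-sampling proposition with $V_\text{model}=\C[G]\ot V$, realize the embedding $J$ through the first two steps of GPE, and implement strong Fourier sampling for $R\ot\id$ by applying $\QFT_G$, measuring the summand label (which equals $\la^*$, equivalently outputting its dual $\la$), and then measuring the basis index of the $V_\mu^*$ register. Your additional checks (preparing $\ket{\tau}$ via $\QFT_G^{-1}$, $J$ being an isometry, the proposition applying to finite groups) only spell out what the paper leaves to the discussion before the corollary.
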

\begin{proof}
This follows by reduction to \cref{thm: generalapproach}, with $V_\text{model} \coloneqq \C[G]\ot V$.
By assumption, we can efficiently implement the embedding $J \colon V\hookrightarrow V_\text{model}$ by following the first two steps of GPE, as discussed above.
Strong Fourier sampling for the right regular action on~$V_\text{model}$ can be implemented by applying~$\QFT_G$, measuring the label~$\mu$ of the summand~$V_\mu \ot V_\mu^*$ in \cref{eq:qft} (as in the last two steps of GPE), but outputting~$\mu^*$ and subsequently measuring the register~$V_\mu^*$.
Thus the claim follows from \cref{thm: generalapproach} (see \cref{fig:GPE}, right for the resulting circuit).
\end{proof}

We now comment on the approaches of Refs.~\cite{bravyi2024quantum, ikenmeyer2023remark} and the relation to \cref{cor:gpe bqp}.
To capture the Kronecker coefficients~$g_{\lambda,\mu,\nu}$, Ref.~\cite{bravyi2024quantum} uses weak Fourier sampling on three $\C[S_n]$ registers, accepting if and only if the outcomes are~$\lambda,\mu,\nu$, respectively, which selects the subspace $[\lambda] \ot [\lambda^*] \ot [\mu] \ot [\mu^*] \ot [\nu] \ot [\nu^*]$ of $V = \C[S_n] \ot \C[S_n] \ot \C[S_n]$, as is clear from \cref{eq:qft}.
This is followed by ordinary GPE for the diagonal left regular action of~$S_n$ on~$V$.
The corresponding multiplicites are not the Kronecker coefficients, but rather their multiples~$\dim [\lambda] \dim [\mu] \dim [\nu] \, g_{\lambda,\mu,\nu}$.
Ref.~\cite{ikenmeyer2023remark} observed that a more refined procedure selects a single copy of $[\la]\ot[\mu]\ot[\nu]$ in~$V$, leading to a $\#\BQP$-algorithm for the exact Kronecker coefficients.
To achieve this they use GPE for the Young subgroups~$S_\lambda, S_\mu, S_\nu \subseteq S_n$.
In their setting, this has the same effect as our measurement of the~$V_\lambda^*$ register in the proof of \cref{cor:gpe bqp} (see \cref{fig: embed+SFS}, right).

\begin{exa}[Restriction coefficients]
    Recall that the restriction coefficients~$r_\lambda^\mu$ are defined as the multiplicities of Specht modules in Weyl modules by embedding $S_n \subset \GL(n)$ as permutation matrices.
    That is, they are defined as the multiplicity of the irreducible $S_n$-representation~$[\la]$ in~$\{\mu\}\!\!\downarrow^{\GL(n)}_{S_n}$.
    We can first embed
    \[
        \irr{\mu}{\GL(n)}\hookrightarrow (\C^n)^{\ot |\mu|},
    \]
    by applying an inverse Schur transform (as in the case of plethysms).
    Since the controlled group action of~$S_n$ on $\C^n$ can be implemented efficiently, and there is an efficient QFT for the symmetric group~$S_n$, we can apply \cref{cor:gpe bqp} to see that the restriction coefficients are in~$\#\BQP$.
\end{exa}

{\footnotesize
\bibliographystyle{abbrv}

}

\end{document}